\let\csname equation*\endcsname\relax
\let\csname endequation*\endcsname\relax
\let\csname equation*\endcsname\relax
\let\csname endequation*\endcsname\relax
\newcommand{\Eq}[1]{(\ref{eq:#1})}
\newcommand{\Th}[1]{Th.~\ref{thm:#1}}
\newcommand{\Sec}[1]{\S \ref{sec:#1}}
\newcommand{\Fig}[1]{Fig.~\ref{fig:#1}}
\newcommand{\Tbl}[1]{Table~\ref{tbl:#1}}
\newcommand{\InsertFig}[5][h!t]{
\begin{figure}[#1]
       \centerline{
         \includegraphics[scale=1.2]{#2}
       }
       \caption{#3
       \label{fig:#4}}
\end{figure}}
\newcommand{\InsertFigTwo}[5] {
\begin{figure}[b]
       \centerline{
         \includegraphics[width=#5]{#1}
         \includegraphics[width=#5]{#2}
       }
       \caption{#3
       \label{fig:#4}}
\end{figure}}
\newcommand{\bC}{{\mathbb{ C}}}
\newcommand{\bR}{{\mathbb{ R}}}
\newcommand{\bT}{{\mathbb{ T}}}
\newcommand{\bZ}{{\mathbb{ Z}}}
\newcommand{\cA}{{\cal A}}
\newcommand{\cL}{{\cal L}}
\newcommand{\cO}{{\cal O}}
\newcommand{\eps}{\varepsilon}
\newcommand{\txtQ}{{\text{Q}}}
\newcommand{\sgn}{\mathop{\rm sgn}\nolimits}
\renewcommand{\tr}[1]{\mathop{\rm tr}({#1})}
\newtheorem{thm}{Theorem}
\newcommand{\beq}[1]{\begin{equation}\label{eq:#1}}
\newcommand{\eeq}{\end{equation}}
\newenvironment{se}[1]{\equation\label{eq:#1}\aligned}{\endaligned\endequation}
\newcommand{\bsplit}[1]{\begin{se}{#1}}
\newcommand{\esplit}{\end{se}}
\newenvironment{example}[1][]
  {
	\setlength \leftmargini {1.0em}		
	\setlength \topsep {0.5em}			
	\begin{quote}
	{\it Example#1} }
	{\end{quote}
  }
\newcommand{\bexam}[1][:]{\begin{example}[#1]}
\newcommand{\eexam}{\end{example}}
\newcommand{\uv}{\text{v}}  
\providecommand*{\oneD}{\textsc{1d}}
\providecommand*{\twoD}{\textsc{2d}}
\providecommand*{\threeD}{\textsc{3d}}
\providecommand*{\fourD}{\textsc{4d}}
\newcommand{\Hen} {H\'enon}
\providecommand*{\Areg}{A_{\text{reg}}}   
\providecommand*{\tesc}{t_{\text{esc}}}   
\newcommand{\Cs}{C^\text{s}}
\newcommand{\Ca}{C^\text{a}}
\newcommand{\tq}{\tilde{q}}
\newcommand{\tp}{\tilde{p}}
\newcommand{\tK}{\tilde{K}}
\newcommand{\hq}{\hat{q}}
\newcommand{\hp}{\hat{p}}
\newcommand{\ahenon}{a_{\text{h}}}
\newcommand{\ahenOne}{a_{\text{h}1}}
\newcommand{\ahenTwo}{a_{\text{h}2}}
\newcommand{\hFro}{L}
\newcommand{\movierefall}{For a rotating view see
\href{http://www.comp-phys.tu-dresden.de/supp/}{http://www.comp-phys.tu-dresden.de/supp/}.}
\begin{document}
\title[The Quadfurcation in Moser's Map]{Elliptic Bubbles in Moser's 4D Quadratic Map:\\ the Quadfurcation}
\author[B\"acker and Meiss]{Arnd B\"acker$^{1,2}$, and James D.~Meiss$^{3}$}

\address{$^1$Technische Universit\"{a}t Dresden,\\
          Institut f\"{u}r Theoretische Physik and Center for Dynamics, 01062 Dresden, Germany}
\address{$^2$Max-Planck-Institut f\"{u}r Physik komplexer Systeme, N\"{o}thnitzer Strasse 38,\\
 01187 Dresden, Germany}
\address{$^3$University of Colorado,\\
Department of Applied Mathematics,
Boulder, CO 80309-0526, USA}

\begin{abstract}
Moser derived a normal form for the family of four-dimensional, quadratic, symplectic maps in 1994.
This six-parameter family generalizes H\'enon's ubiquitous \twoD{} map
and provides a local approximation for the dynamics of more general \fourD{} maps.
We show that the bounded dynamics of Moser's family is organized by a codimension-three
bifurcation that creates four fixed points---a bifurcation analogous to a doubled, saddle-center---which
we call a \textit{quadfurcation}.

In some sectors of parameter space a quadfurcation creates four fixed points from none,
and in others it is the collision of a pair of fixed points that re-emerge as two or possibly four.
In the simplest case the dynamics is similar to the cross product of a pair of H\'enon maps,
but more typically the stability of the created fixed points does not have this simple form.
Up to two of the fixed points can be doubly-elliptic and be surrounded by bubbles
of invariant two-tori; these dominate the set of bounded orbits. The quadfurcation can also
create one or two complex-unstable (Krein) fixed points.

Special cases of the quadfurcation correspond to a pair of weakly coupled H\'enon maps
near their saddle-center bifurcations.
The quadfurcation also occurs in the creation of accelerator modes in a \fourD{} standard map.

\end{abstract}
\vspace*{1ex}
\noindent

\vspace{2pc}
\noindent{\it Keywords}: \Hen~map, symplectic maps, saddle-center bifurcation, Krein bifurcation, accelerator modes, invariant tori

\date{\today}
\maketitle

\section{Introduction}
Multi-dimensional Hamiltonian systems model dynamics on scales ranging from zettameters, for the dynamics of stars in galaxies \cite{Cincotta02, Contopoulos13}, to nanometers, in atoms and molecules \cite{Gekle06, Paskauskas08}.
Hamiltonian flows generate symplectic maps on Poincar\'e sections \cite[\S9.14]{Meiss17a}, and numerical algorithms for these flows can be symplectic \cite{McLachlan06,Forest06}. Symplectic maps also arise directly in discrete-time models of such phenomena as molecular vibrations \cite{Gaspard89, Gillilan91},  stability of particle storage rings \cite{Warnock92,Dumas93}, heating of particles in plasmas \cite{Howard86}, microwave ionization of hydrogen \cite{Casati91} and chaos in celestial mechanics \cite{Wisdom91}

A map $f: \bR^{2n} \to \bR^{2n}$ is canonically symplectic for coordinates $x \in \bR^n$ and momenta $y \in \bR^n$ if its Jacobian matrix, $Df(x,y)$, satisfies
\beq{Symplectic}
	Df^T J Df = J, \quad J = \begin{pmatrix} 0 & -I \\ I & 0 \end{pmatrix},
\eeq
where $J$ is the Poisson matrix. In particular this implies that the map is volume preserving: $\det(Df) = 1$.

Perhaps the most famous symplectic map is the area-preserving map introduced by \Hen~in 1969 as an elemental model to inform his studies of celestial mechanics \cite{Henon69}. This map is also the simplest nonlinear symplectic map, since it contains a single quadratic term, and yet---as \Hen~showed---every quadratic area-preserving map can be reduced to his form \cite{Henon76}.

Quadratic maps are useful because they model the dynamics of smooth maps in the
neighborhood of a fixed point or periodic orbit.
For example, quadratic terms in the power series give a local description of the
dynamics near an accelerator mode of Chirikov's standard map \cite{Karney82}.
More generally, any symplectic diffeomorphism can be
$C^\infty$ approximated by a polynomial map on a compact set \cite{Turaev03}.

Higher-dimensional analogues of \Hen's map were proposed in \cite{Mao88}, and similar maps were used to study the stickiness of regions near an elliptic fixed point \cite{Ding90}, the resonant formation of periodic orbits and invariant circles \cite{Todesco94, Todesco96, Gemmi97, Vrahatis96}, bifurcations due to twist singularities \cite{Dullin03}, and the dynamics near a homoclinic orbit to a saddle-center fixed point \cite{Gonchenko04}. Such maps model a focussing-defocussing (FODO) magnet cell in a particle accelerator and have been used to study the structure of bounded orbits, the dynamic aperture, and robustness of invariant tori \cite{Bountis94, Vrahatis97, Giovannozzi98, Bountis06}.

In 1994, Moser \cite{Moser94} showed that every quadratic symplectic map
on $\bR^{2n}$ is conjugate to the form
\beq{AffineShear}
	f = \alpha \circ \sigma \circ \beta.
\eeq
Here $\alpha, \beta : \bR^{2n} \to \bR^{2n}$ are symplectic maps, $\beta$ is linear and $\alpha$ is affine, and $\sigma: \bR^n \times \bR^n \to \ \bR^n \times \bR^n$ is a symplectic shear:
\beq{Shear}
	\sigma(x,y) = (x, y - \nabla V(x)) ,
\eeq
where $V: \bR^n \to \bR$ is a cubic potential. There are several immediate consequences of this representation. Firstly, if the quadratic map $f$ has finitely many fixed points, as it generically will, then there are at most $2^n$ \cite{Moser94}.
Note that more generally a quadratic non-symplectic map on a $2n$-dimensional space could have as many as
$2^{2n}$ isolated fixed points. Secondly, since the inverse of $\sigma$ is also a quadratic shear of the same
form (replace $V$ by $-V$), the inverse of any quadratic, symplectic map is also quadratic. More generally,
the inverse of a quadratic diffeomorphism could be a polynomial map of higher degree \cite[Thm 1.5]
{BasConWri1982}; for example, the inverse of the volume-preserving map $(x,y,z,w) \mapsto (x,y+x^2,z+y^2,w+z^2)$
has degree eight. The form \Eq{AffineShear} also applies to cubic maps, but not to higher degree polynomial maps
\cite{Koch14}.

In this paper we study the dynamics of Moser's map in four dimensions. The normal form for the \fourD{} case is reviewed and slightly transformed for convenience in \Sec{MosersMap}. We argue in \Sec{Quadfurcation} that its fixed points are most properly viewed as arising from a bifurcation in which they emerge from a single fixed point as parameters are varied away from a codimension-three surface. Since this bifurcation often results in the creation of four fixed points, we call it a \textit{quadfurcation}, with thanks to Strogatz who, ``with tongue in cheek," proposed the term in an exercise for \oneD{} \textsc{ode}s in his well-known textbook \cite[Ex. 3.4.12]{Strogatz15}.

As we will see in \Sec{FixedPoints}, the unfolding of the quadfurcation in Moser's map can lead to
(i) the creation of four fixed points from none, or
(ii) a collision and re-emergence of two pairs of fixed points, or
(iii) even the collision of a pair leading to four fixed points.
The stability of these fixed points is investigated in \Sec{Stability}.
The unfolding of the quadfurcation along paths in parameter space is studied in \Sec{QuadLines}-\Sec{TwoToFour}.
When the map is reversible, \Sec{Symmetric}, additional cases occur including the simplest one: the
Cartesian product of a pair of area-preserving maps.
We investigate the creation of families of invariant two-tori,
as expected from KAM theory, around doubly elliptic fixed points in \Sec{Bubbles}.
In \Sec{Bounded} we observe that these bubbles of elliptic orbits strongly correlate with the regions of bounded dynamics.

Since Moser's map is affinely conjugate to the general quadratic, symplectic map, it must have a limit in which it reduces to a pair of uncoupled
\Hen~maps---we show this in \Sec{HenonMaps}. Finally, we show in \Sec{Accelerator} that the dynamics near an accelerator mode of the \fourD{} standard map (Froeschl\'e's map \cite{Fro1971}), can be modeled by Moser's map, and indeed, the local dynamics reduces to a coupled version of the \Hen~maps obtained in the previous section.

\section{Moser's quadratic, symplectic map}\label{sec:MosersMap}

\subsection{Four-Dimensional Normal Form}\label{sec:NormalForm}

For the two-dimensional case, the map \Eq{AffineShear} can be transformed by an affine coordinate change to the \Hen~map $H$,
\beq{HenonMap}
	H(x,y) =  (-y + \ahenon + x^2, x),
\eeq
with a single parameter $\ahenon$. When this map has an elliptic fixed point (for $-3 < \ahenon < 1$),
 it is conjugate to the map whose dynamics were first studied by \Hen~\cite{Henon69}. By a similar transformation Moser showed \cite{Moser94} that in four dimensions, \Eq{AffineShear} can generically be written as
\beq{MoserMap}
	(x', y') = f(x,y) = (C^{-T} (-y + \nabla V (x)), Cx) ,
\eeq
where
\bsplit{4DMap}
      C &= \begin{pmatrix} \alpha & \beta \\ \gamma & \delta \end{pmatrix} ,
      \quad  \\
      V &= A_1 x_1 + A_2 x_2 + \tfrac12 A_3 x_1^2 + \eps_2 x_1^3 + x_1 x_2^2 ,
\esplit
and $x\equiv(x_1, x_2) \in \bR^2$, $y\equiv(y_1, y_2) \in \bR^2$.
Here  there are two discrete parameters, $\eps_1 \equiv \det(C) = \alpha \delta -\beta\gamma\  = \pm 1$, and $\eps_2 \equiv \pm1$ or $0$.
The remaining six parameters are free.
It is convenient to think of the six real parameters of $f$ as $(A_1,A_2,A_3)$ and $(\alpha,\delta, \mu)$, where
\beq{muDefine}
	\mu = \beta + \gamma.
\eeq
Indeed, given these six, and the sign $\eps_1$, we can determine the off-diagonal elements of  $C$ from
\bsplit{betagamma}
	\beta, \gamma &= \tfrac12 \mu \pm \sqrt{ \eps_1 -\alpha\delta +\mu^2/4}.
\esplit
The choice of the sign here is unimportant since this simply replaces $C$ with $C^T$, and the resulting map is conjugate to the inverse of \Eq{MoserMap}; see \Sec{SecondDifference}.
Note that \Eq{betagamma} has real solutions only when $\mu^2 \ge 4(\alpha\delta - \eps_1)$,
and that $C$ is symmetric only at the lower bound of this inequality.

The map \Eq{MoserMap} is easily seen to be symplectic \Eq{Symplectic} as it is the composition of the symplectic shear \Eq{Shear},
the Poisson map, $J(x,y) = (-y,x)$, and the linear symplectic map
$$
	(x,y) \mapsto (C^{-T} x, Cy).
$$

\subsection{Shifted Coordinates}\label{sec:Shifted}

As a first step in the analysis of the dynamics of \Eq{MoserMap}, we will study its fixed points. To do this it is convenient to define shifted variables and parameters. There is a codimension-three set of parameters where the map has exactly one fixed point, and focussing on this set simplifies the calculations more generally.

For any matrix $C$ and when $\eps_2 \neq 0$, the map \Eq{MoserMap} has exactly one fixed point at
\bsplit{QuadrupleFixed}
  x^\txtQ &= (\delta, \tfrac12 \mu),\\
  y^\txtQ &= C x^\txtQ,
\esplit
when the parameters $(A_1,A_2,A_3)$ of the potential \Eq{4DMap} take the values
\bsplit{AStar}
   A_1^\txtQ &= 3\delta^2\eps_2 + \tfrac14\mu^2 \\
   A_2^\txtQ &=\delta\mu , \\
   A_3^\txtQ &=2 \alpha -6 \delta \eps_2.
\esplit
To see this, and to simplify the computations it is convenient to shift coordinates so that the origin is at the point $(x^\txtQ, y^\txtQ)$ and to define new shifted parameters:
\bsplit{ShiftedCoords}
	(\xi,\eta) &= (x-x^\txtQ,y-y^\txtQ) ,\\
	a &= A_1 - A_1^\txtQ + \delta(A_3-A_3^\txtQ) ,\\
	b &= A_2- A_2^\txtQ ,\\
	c &= A_3- A_3^\txtQ  .
\esplit
In these new coordinates, \Eq{MoserMap} becomes
\beq{ShiftedMap}
   (\xi', \eta') = M(\xi,\eta ) = (\xi + C^{-T}(-\eta + C\xi + \nabla U(\xi)) , C\xi ),
\eeq
where the new potential,
\beq{UPotential}
      U = a\xi_1 + b\xi_2 + \tfrac12 c \xi_1^2+ \eps_2 \xi_1^3 + \xi_1 \xi_2^2 ,
\eeq
is the same as $V$ from \Eq{4DMap} upon replacing $(A_1, A_2, A_3)$ by $(a,b,c)$.
This shifted form of Moser's quadratic, symplectic map
is convenient because several computations can be carried out more easily
and many of the expressions we obtain below will be more compact.

The map  \Eq{ShiftedMap} is generated by the discrete Lagrangian
\beq{Lagrangian}
	\cL(\xi,\xi') =  (\xi'-\xi)^T C \xi  - U(\xi) ,
\eeq
through the equation
$$
	\eta'd\xi' - \eta d\xi = d\cL(\xi,\xi') .
$$
In other words the map is implicitly defined by $\eta = -\partial_\xi \cL (\xi,\xi')$ and $\eta' = \partial_{\xi'} \cL(\xi,\xi')$. This means that $M$ is \textit{exact} symplectic \cite{Meiss15b}, and of course, that it preserves the symplectic form $d\xi \wedge d\eta$.
Note also that if we denote an orbit of \Eq{ShiftedMap} as a sequence
\beq{Orbit}
	\{(\xi_t,\eta_t) \in \bR^4 \,|\, (\xi_t,\eta_t) = M(\xi_{t-1},\eta_{t-1}),\, t \in \bZ\}
\eeq
and define the action of a finite portion by
$$
	\cA = \sum_{t=j}^{k-1} \cL(\xi_t,\xi_{t+1}),
$$
then each stationary point of $\cA$, for fixed endpoints,
is a segment of an orbit with the momentum determined by $\eta_{t+1} = C \xi_t$.

\subsection{Second Difference Form and ODE Limit}\label{sec:SecondDifference}
The shifted form \Eq{ShiftedMap} of Moser's map can be written as a second difference equation. Denoting an orbit as \Eq{Orbit},
then $\eta_t = C \xi_{t-1}$ and the map \Eq{ShiftedMap} is equivalent to
\beq{SecondDiff1}
	C^T (\xi_{t+1}-\xi_t) - C(\xi_t - \xi_{t-1}) = \nabla U(\xi_t) .
\eeq
One immediate consequence of \Eq{SecondDiff1} is that the replacement $C \to C^T$ is clearly equivalent to inverting the map. Therefore the invariant sets of the Moser map with $C \to C^T$ are the same as those of the original map. Similarly, note that the replacement $C\to -C$ together with $\xi \to -\xi$ and $c \to -c$ leaves the Moser map invariant.
We will also use the form \Eq{SecondDiff1} in \Sec{Stability} and \Sec{Bounded}.

To emphasize the different roles of the symmetric and antisymmetric parts of $C$, let
\bsplit{CSym}
	C &= \Cs + \Ca, \\
	\Cs &\equiv \tfrac12(C+C^T) =\begin{pmatrix} \alpha & \mu/2 \\ \mu/2 & \delta \end{pmatrix},\\
	\Ca &\equiv \tfrac1{2}(C-C^T) = \begin{pmatrix} 0 & \nu/2 \\ -\nu/2 & 0 \end{pmatrix},
\esplit
where $\nu = \beta-\gamma$ and, as before $\mu = \beta + \gamma$. Then \Eq{SecondDiff1} becomes
\beq{SecondDiff2}
	\Cs (\xi_{t+1}-2\xi_t+ \xi_{t-1}) - \Ca(\xi_{t+1}-\xi_{t-1})  = \nabla U(\xi_t).
\eeq
In this form the map closely resembles a pair of second order differential equations.

Indeed in a neighborhood of the origin in the phase space, $(\xi,\eta)$, and in the space of the new parameters, $(a,b,c, \nu)$, \Eq{SecondDiff2} approaches a Lagrangian system of ODEs.
To see this, formally introduce a parameter $h$, and  scale
$$
	(a,b,c,\nu) \to (h^4a,h^4 b,h^2c, h \nu).
$$
Here $h \nu$ represents the deviation from symmetry, so that $C \to \Cs + h \Ca$.
Then in the limit $h \to 0$, the second difference equation \Eq{SecondDiff1} limits on a system of ODEs in a scaled
time  $\tau = ht$, and a new variable
$$
	\xi_t \to h^2 q(\tau) ,
$$
This scaling implies that for the potential \Eq{UPotential}, $\nabla_\xi U(\xi) \to h^4 \nabla_q U(q)$.
Moreover, as $h \to 0$, the second difference $\xi_{t+1}-2\xi_t+ \xi_{t-1} \to h^4 \ddot{q}(\tau) + \cO(h^5)$ and the first difference $\xi_{t+1}-\xi_{t-1} \to 2h^3\dot{q}(\tau) +\cO(h^4)$. Substituting these into \Eq{SecondDiff2}, gives the limiting system
\beq{ODEs}
	\Cs \ddot{q}  - 2 \Ca \dot{q} = \nabla U(q) ,
\eeq
as $h \to 0$.
Thus the symmetric part of $C$ corresponds to a mass matrix, multiplying the acceleration. By contrast, $\Ca$ corresponds to a Coriolis-like force, which is proportional to the velocity.

The system \Eq{ODEs} is obtained from the Lagrangian
\beq{Lagrangian2}
	L(q,\dot{q}) = \tfrac12 \dot{q}^T \Cs \dot{q} + q^T \Ca \dot{q} + U(q).
\eeq
To convert \Eq{ODEs} into a Hamiltonian system define the canonical momenta
$$
	p(\tau) = \frac{\partial L}{\partial \dot{q}} = \Cs\dot{q} - \Ca q  ,
$$
giving a Hamiltonian, $H = p \dot{q} - L$, that has Coriolis and centripetal terms:
\beq{Hamiltonian}
	H(q,p) = \tfrac12 p^T C^{-s} p + p^T C^{-s} C^{a} q - \tfrac12 q^T (\Ca C^{-s} \Ca) q - U(q) .
\eeq
Thus $\Cs$ is the mass matrix, and $U$ is the \textit{negative} of the potential energy. The antisymmetric matrix $\Cs$ contributes both a Coriolis-like term, bilinear in $q$ and $p$, and a centripetal-like term, quadratic in $q$.
We will use this interpretation, for the case of symmetric $C$, in \Sec{Symmetric}.

\section{Quadfurcation}\label{sec:Quadfurcation}
From the general theory \cite{Moser94} we know that \Eq{MoserMap} and hence \Eq{ShiftedMap} has at most four (isolated) fixed points. On the codimension-three surface $a = b = c = 0$ in parameter space there is a single fixed point (unless $\eps_2 = 0$). As we will see below, there are sectors in parameter space near this surface for which there are no fixed points, and sectors for which there are four. It seems appropriate to call the creation of four fixed points from none a \textit{quadfurcation}. In some cases a quadfurcation can be analogous to a simultaneous pair of co-located saddle-center bifurcations; however, the stabilities of the resulting fixed points are usually not those of a pair of decoupled area-preserving maps, namely the Cartesian product of \twoD{} saddles and centers.

In the following subsections we study the fixed points, their stability, and the structure of the region of phase space around the elliptic fixed points that contains bounded orbits.

\subsection{Fixed Points}\label{sec:FixedPoints}

The coordinates, $\xi^*$, of the fixed points of \Eq{ShiftedMap} are critical points of the cubic polynomial \Eq{UPotential}. Several contour plots of $U(\xi)$ are shown in \Fig{PotentialContour}. Critical points satisfy the equations
\beq{ShiftedFP}
	0 = \nabla U(\xi^*) = \begin{pmatrix}
						a +  c \xi_1^* + 3 \eps_2 \xi_1^{*2} + \xi_2^{*2} \\
						b + 2\xi^*_1 \xi^*_2 \end{pmatrix}.
\eeq
Note that the positions are independent of the matrix $C$, though the momenta, determined by $\eta^* = C\xi^*$,
depend on the full matrix. Note that if $\nabla U(\xi^*) = 0$ for parameters $(a,b,c)$, then it is also zero
at the point $-\xi^*$ for parameters $(a,b,-c)$ and at the point $(\xi_1^*, -\xi_2^*)$ for $(a,-b,c)$. Thus we can restrict attention to $b,c \ge 0$.

The case $a=b=c=0$ is an organizing center for the solutions of
\Eq{ShiftedFP}. In this case the second component immediately implies that either $\xi_1^* = 0$
or $\xi_2^* = 0$. Then, whenever $\eps_2 \neq 0$, the first implies that both $\xi_1^* = \xi_2^*
= 0$. We call this the \textit{quadfurcation point}. Since the matrix elements $(\alpha,\delta,
\mu)$ are still free parameters, it occurs on a codimension-three surface in the six-dimensional
parameter space. The off-diagonal elements of the matrix, $\beta$ and $\gamma$, are then fixed up to exchange
by the condition det$(C)=\eps_1$, \Eq{betagamma}. In the parameterization
\Eq{ShiftedMap} the quadfurcation surface is just the three-plane $a=b=c=0$. In Moser's original
parameterization, this surface is determined by \Eq{AStar}.

\InsertFig[b]{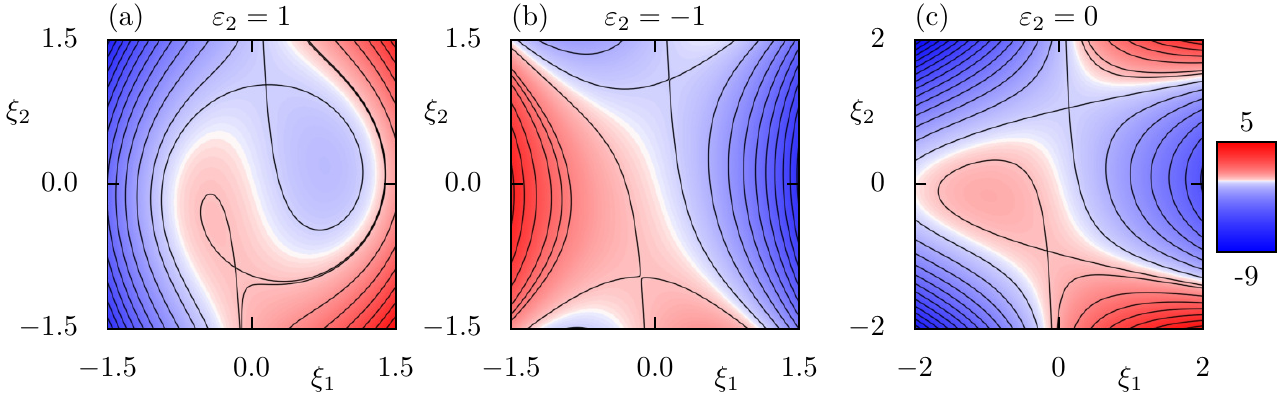}{Contour plots of the potential $U$ \Eq{UPotential} for $(a,b,c) = (-1,-0.3,-1)$. (a) $\eps_2 = 1$. Here there are four critical points, implying four fixed points of \Eq{ShiftedMap}. (b) $\eps_2=-1$, two critical points. (c) $\eps_2 = 0$, three critical points.}{PotentialContour}{Note:WidthIsDeterminedFromPdfFile}

\InsertFig[b]{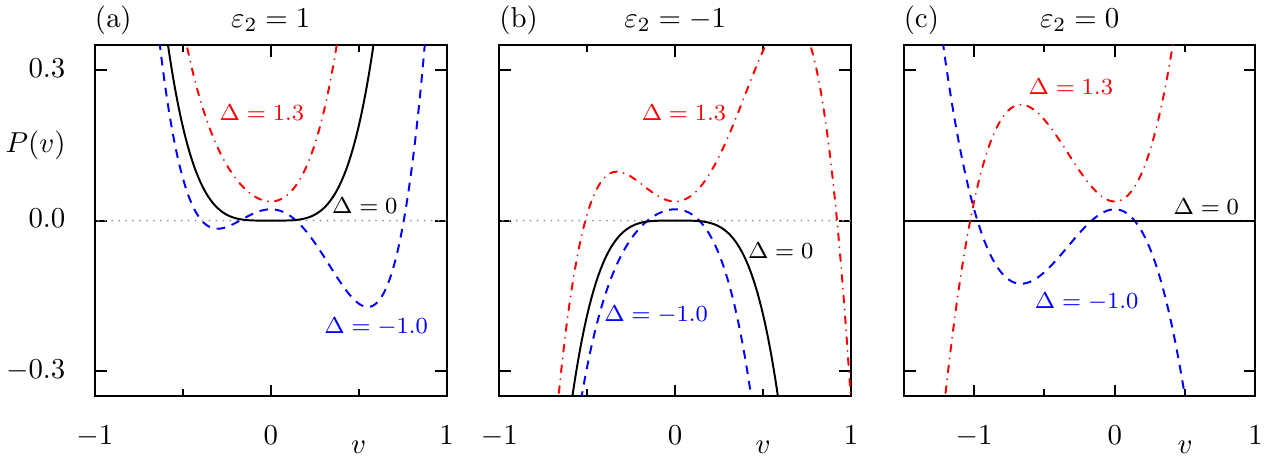}
{The polynomial $P(v)$ \Eq{PofV} along the curve $(a,b,c) = \Delta(1,0.3,1)$ as $\Delta$ varies.
(a) $\eps_2 = 1$ where four roots are created when $\Delta$ decreases through zero;  (b) $\eps_2 = -1$ where there are two roots for any $\Delta \neq 0$; and (c) $\eps_2 = 0$ with one root for $\Delta >0$, infinitely many at $\Delta = 0$, and three for $\Delta < 0$.}{PofV}{Note:WidthIsDeterminedFromPdfFile}

More generally if $b \neq 0$ then \Eq{ShiftedFP} implies that $\xi_1^* \neq 0$, so
\beq{xi2Ofxi1}
	\xi_2^* = -\frac{b}{2\xi_1^*}  .
\eeq
Substituting into the first component of \Eq{ShiftedFP} then shows that $\xi_1^*$ must be a root of
the scalar polynomial
\beq{PofV}
	P(v;a,b,c,\eps_2) = 3\eps_2v^4 + c v^3 + a v^2 + \tfrac14 b^2 .
\eeq
When $\eps_2 \neq 0$ this polynomial is quartic, and so has at most four roots. Since $P$ has no linear term, it has exactly one root in $\bC$ only when $a = b = c = 0$, on the quadfurcation set. When $\eps_2 = 0$, $P(v)$ is at most cubic, and there are at most three isolated roots. Several examples are shown in \Fig{PofV}.

There are various regions in the parameter space $(a, b, c)$ that have different numbers of fixed points.
We now determine the bifurcation sets, which separate these regions:
to find these sets when $\eps_2 \neq 0$, it is easiest to solve for the surfaces on which there are double roots, i.e., $P(v)=0$ and $P'(v)=0$.
First, $P$ always has a critical point, $P'(v)=0$,
at $v=0$, and it has two more critical points if
\beq{aPlus}
	\eps_2a < a_+(c) \equiv \tfrac{3}{32} c^2 .
\eeq
Eliminating $v$ from the two equations $P(v) = P'(v) =0$ gives the discriminant
$$
	b^2 \left[ 1728\eps_2b^4+ 9( -3c^4+48\eps_2 ac^2-128a^2 ) b^2-16a^3c^2+192\eps_2a^4 \right]
			= 0.
$$
Thus there are double roots at $b = 0$, and on the surfaces $b = \pm \sqrt{D_\pm(a,c)}$, where
\beq{DPlusMinus}
	D_\pm(a,c) = \frac{\eps_2}{384}\left( 3(c^2-8\eps_2 a)^2 - 64a^2 \pm
	     \frac{|c|}{\sqrt{3}} (3c^2-32\eps_2a)^{3/2}\right) .
\eeq
For these surfaces to be real, the radical in \Eq{DPlusMinus} must be real,
i.e., \Eq{aPlus} must be satisfied.
Moreover letting
\beq{aMinus}
	a_-(c) \equiv \tfrac{1}{12}c^2,
\eeq
then $D_+(\eps_2 a_- ,c) = 0$ and $D_-(0,c) = 0$.
To define real-valued functions let
\bsplit{bPlusMinus}
	b_+(a,c) &= \left\{ \begin{array}{cl}
						\sqrt{D_+(a,c)}, &  a < \eps_2 a_-, \mbox{ and } \eps_2 a < a_+ \\
						0,		         & \mbox{ otherwise }
				  \end{array} \right. ,\\
	b_-(a,c) &= \left\{ \begin{array}{cl}
						\sqrt{D_-(a,c)}, &  a < 0, \mbox{ and } \eps_2 a < a_+ \\
						0,		         & \mbox{ otherwise }
				  \end{array} \right. .
\esplit
The resulting surfaces are shown in \Fig{BifSurfaces}.

\InsertFigTwo{bifSurfP}{bifSurfM}{Surfaces in $(a,b,c)$ for which there are double roots of $P$ \Eq{PofV} (a) when $\eps_2 = 1$ and (b) when $\eps_2 = -1$. For $\eps_2 =1$, when $b$ is sufficiently positive there are no real solutions, as $b$ decreases through the $b_+$ surface \Eq{bPlusMinus}, two solutions are  created, and finally when it passes through through the $b_-$ surface there are four solutions. For $\eps_2 = -1$, there are two solutions
if $b$ is sufficiently large. An additional two solutions are created upon moving through either of the $b_{\pm}$ surfaces.}{BifSurfaces}{7.8cm}

If $\eps_2 = 1$, then when $b > b_+$ there are no real roots. At the upper surface, $b = b_{+}(a,c)$, which is nonzero for $a < a_-$, two roots are created. Two additional roots are created upon crossing $b_-(a,c)$, which is nonzero for $a < 0$, see \Tbl{NumFixed}. The two surfaces $b_\pm(a,c)$ intersect at $c=0$ on the line
$a = -\sqrt{3} |b|$. Crossing this codimension-two set $b_- = b_+$ at $c = 0$ and moving into the region $ b < b_-$ thus corresponds to the simultaneous creation of four fixed points at two different locations, i.e., to a pair of simultaneous saddle-center bifurcations, as we will see in \Sec{Stability}.

If $\eps_2 = -1$, then \Eq{PofV} has four real roots only if $-a_+< a < 0$ and $b_+ < b < b_-$. Note that $b_+$ is nonzero only when $-a_- < a < 0$.
Inside cusp-like shape formed from the $b_\pm$ surfaces,
as shown in \Fig{BifSurfaces}(b), there are four roots.
Going outwards from this region, by either crossing
$b_-$ or $b_+$, two solutions disappear in a saddle-center
bifurcation.
Thus on the surfaces $b = b_+$ or $b = b_-$ there are three roots
(one of them with multiplicity 2).
When these surfaces merge,  on the curve $a=-a_+$ and $b = b_\pm(-a_+,c) = c^2/32$, there are two fixed points:
$$
	(\xi_1^*, \xi_2^*) = \left\{\begin{array}{ll}
						\frac{c}{8}( 1,-1),  &\mbox{ (multiplicity 3)} \\
	                     \frac{c}{24} (-1, 9), & \mbox{ (multiplicity 1)}
	                     \end{array} \right.  \quad (\eps_2 = -1, a= -3b = -a_+).
$$

The cases $b=0$ and $c=0$ require special treatment.
When $c=0$ but $b\neq 0$, the polynomial \Eq{PofV} has no cubic term and
the fixed points can be solved for explicitly:
\beq{cZeroFPs}
	\xi_1^* = \pm \sqrt{-\tfrac16 \eps_2 \left(a \pm \sqrt{a^2-3\eps_2b^2}\right)}
	\quad\quad (c = 0) ,
\eeq
where $\xi_2^*$ is then obtained from \Eq{xi2Ofxi1}.
Note that there are four possible points
here, with choices for the outer $\pm$ and the inner $\pm$.
This equation gives real solutions only when both square roots are real.
When $\eps_2 = 1$, \Eq{cZeroFPs} gives four real solutions if $a < -\sqrt{3}|b|$.
On the boundary $a=-\sqrt{3}|b|$, these four solutions are created in two pairs, at
$$
	(\xi_1^*,\xi_2^*) = \pm \sqrt{-\frac{a}{6}}\left(1,-\sqrt{3}\sgn(b))\right)
			\quad (\eps_2 = 1, a=-\sqrt{3}|b|,c=0) .
$$
These pairs merge at the quadfurcation point $a=b=c=0$. If $\eps_2 =-1$ then only the inner $+$ sign choice is valid and \Eq{cZeroFPs} gives two real solutions whenever $a$ or $b \neq 0$. Table~\ref{tbl:NumFixed} delineates the possibilities.

If $b = 0$, then \Eq{ShiftedFP} implies that either $\xi_1^* = 0$ or $\xi_2^* = 0$. The first component of \Eq{ShiftedFP} is then trivially a quadratic. In this case the four solutions are
\beq{bZeroFP}
   (\xi_1^*,\xi_2^*) = \left\{ \begin{array}{l}
   \left(0,  \pm \sqrt{-a} \right )\\
	\left(\frac{\eps_2}{6}( - c \pm \sqrt{c^2 - 12\eps_2 a }), 0 \right )
	\end{array}	\right.  \quad  (b = 0).
\eeq
Note that when $\eps_2 = 1$ there are four real fixed points whenever $a <0$, and two
in the range $0 < a < a_-$. If $\eps_2 = -1$, then the first pair is real when $a \le 0$, and the second pair is real only if $ a \ge -a_-$. Thus there are four real fixed points when $-a_- < a < 0$.

\newcommand\T{\rule{0pt}{2.6ex}}       
\newcommand\B{\rule[-1.2ex]{0pt}{0pt}} 

\begin{table}[tbp]
   \centering
   \begin{tabular}{@{} r| cc||c|c|c|c|c @{}}
    \multicolumn{3}{c||}{} &  \multicolumn{5}{c}{Number of Fixed Points} \\
      $\eps_2$ & $b$ &$c$    & $0$ & $1$  & $2$ & $3$ & $4$ \\ \hline \hline\hline
      1     &   &   & $b>b_+$ & $b=b_+$ & $b \in (b_-,b_+)$       & $b=b_-$& $b < b_-$       \T\B \\ \cline{2-8}
            &$0$&   & $a > a_-$         & $a = a_-$ & $0< a < a_-$      & $a=0 $ & $a<0$            \T\B\\ \cline{2-8}
            &   &$0$& $a>-\sqrt{3}|b|$  &           & $a = -\sqrt{3}|b|$&        & $a<-\sqrt{3}|b|$ \T\B\\ \cline{2-8}
            &$0$&$0$& $a > 0$           & $a = 0$   &                   &        & $ a <0$          \T\B\\
     \hline\hline\hline
      $-1$    &   &  &       & &  $a \notin (-a_+,0]$,    & $b=b_+$,   & $a \in (-a_+,0)$,      \T\B\\
      		  &   &  &       & &  or $b\notin [b_+,b_-]$ & or $b=b_-$  &  and $ b\in (b_+,b_-)$  \T\B\\ \cline{2-8}
            &$0$&  &         & & $a\notin [-a_-,0]$  &  $a=-a_-,0$& $a\in(-a_-,0)$         \T\B\\ \cline{2-8}
          &   &$0$&          & & always               &                   &  \T\B\\ \cline{2-8}
           &$0$&$0$&         & $a=0$                      & $a \neq 0$        & &  \T\B\\
    \hline\hline\hline
      0     &   &  &         & $a > -a_0$ &  $a = -a_0$ &$a < -a_0$ &  \T\B\\ \cline{2-8}
            &$0$&  &         & $a=0$      & $a\neq 0$   &           &  \T\B\\ \cline{2-8}
            &   &$0$&$a\ge 0$&            & $a < 0$     &           &  \T\B\\ \cline{2-8}
            &$0$&$0$&        & $a\neq0$   &             &           &  \T\B
 \end{tabular}
   \caption{Number of fixed points of the map \Eq{ShiftedMap} depending upon $\eps_2$ and parameters
   $(a,b,c)$. Since this number is an even function of $b$ and $c$, we can assume that both are nonnegative.
   The functions $b_\pm(a,c)$ are given by \Eq{bPlusMinus}, and $a_\pm(c)$ by \Eq{aPlus} and \Eq{aMinus}.
   The additional rows delineate special cases when either $b$ or $c$ or both are zero.
   \label{tbl:NumFixed}}
\end{table}

For $\eps_2 =0$ and $c \neq 0$, the polynomial \Eq{PofV} is cubic, so there is always at least one fixed point.
The critical points of $P$ are at $v = 0$ and $v = -2a/3c$, and critical values $P(0) = \tfrac14 b^2 \ge 0$ and
$P(-2a/3c) = \tfrac{4}{27}\frac{a^3}{c^2}+ \tfrac14 b^2$. Thus there are three fixed points when
\beq{azero}
	a < -a_0 \equiv -3\left|\frac{bc}{4}\right|^{2/3} .
\eeq
If $c = 0$, then the polynomial \Eq{PofV} is at most quadratic. It has two solutions if $a < 0$.
Special cases are again shown in \Tbl{NumFixed}.
Finally, when $a=b=c=0$, there is a line of fixed points at $\xi_2 = 0$.
This case (not shown in \Tbl{NumFixed}) is the only one for which there are infinitely many fixed points.

\subsection{Stability}\label{sec:Stability}

The stability properties of fixed points of the map \Eq{ShiftedMap} are most easily computed using the second difference form \Eq{SecondDiff1}. Linearization about a fixed point gives the $2 \times 2$ eigenvalue problem
\beq{EVProblem}
	\left(\lambda C^T + \lambda^{-1} C \right) q = W q ,
\eeq
where $W = C +C^T + D^2U(\xi) $ is the symmetric matrix
\bsplit{WMatrix}
	W 	&= \begin{pmatrix} 2\alpha  & \mu  \\
							 \mu  & 2\delta  \end{pmatrix} +
							 \begin{pmatrix} c + 6\eps_2\xi_1 &  2\xi_2 \\
							 2\xi_2 &  2\xi_1  \end{pmatrix} .
\esplit
Given the coordinate eigenvector $q$, the momentum components are $p = \lambda^{-1}Cq$. A similar analysis can be used, more generally, for a period-$n$ orbit, see \cite[eqs. (21)-(22)]{Kook89}.

Thus for a nontrivial solution of \Eq{EVProblem}, the $2\times 2$ matrix
$$
	N(\lambda) = \lambda C^T+\lambda^{-1} C-W
$$
must be singular.
Since the map is symplectic its eigenvalues must satisfy the reflexive property:  if $\lambda$ is an eigenvalue, then so is $\lambda^{-1}$. This follows for \Eq{EVProblem} because $W^T = W$ implies that $N^T(\lambda) = N(\lambda^{-1})$.
As a consequence the characteristic polynomial can be written as a quadratic in the partial trace $\rho = \lambda + \lambda^{-1}$:
\begin{equation}\label{eq:ReducedPolyn}
	\det(N(\lambda)) = \eps_1 (\rho^2 - A \rho + B-2)
\end{equation}
where we recall that $\eps_1 = \det{C}$. The parameters $A,B$ are
Broucke's stability parameters \cite{Broucke69, Howard87}.
More generally, these parameters are determined by the linearized map $DM$ at a fixed point, by
$A = \tr{DM}$ and
$B = \tfrac12 \left[(\tr{DM})^2 - \tr{DM^2}\right]$;
equivalently, in terms of the eigenvalues $\rho_{1,2}$ of the
reduced characteristic polynomial one has
$A = \rho_1 + \rho_2$ and $B=\rho_1 \rho_2 + 2$, or explicitly
\begin{equation}\label{eq:PartialTraces}
 \rho_{1, 2} = \tfrac{1}{2} \left(A \pm \sqrt{A^2 + 8 - 4B}\right)
            \equiv \lambda_{1,2} + \lambda_{1,2}^{-1},
\end{equation}
where $\lambda_{1,2}, \lambda_{1,2}^{-1}$  are the two reciprocal pairs of
eigenvalues of the characteristic polynomial of the linearized map.

The $(A,B)$-plane is divided into seven stability regions as shown in \Fig{ABPlane}.
These are bounded by the saddle-center ($SC$), and period-doubling ($PD$) lines:
\bsplit{BifCuvres}
	SC &= B-2A+2 = 0, \\
	PD &= B+2A+2 = 0,
\esplit
on which there is a pair of eigenvalues at $+1$ or $-1$, respectively,
and the Krein parabola ($KP$)
\beq{KreinBifs}
	KP = B-A^2/4-2 = 0
\eeq
on which there are double eigenvalues on the unit circle (for $B<6$ and $|A| < 4$) or real axis (for $B \ge 6$ and $|A|\ge 4$.). The point $(A,B) = (4,6)$ corresponds to four unit eigenvalues.
The seven stability regions with different types of linearized dynamics
around the fixed point are labeled by
combinations of E (elliptic), H (hyperbolic),
and I (inverse hyperbolic), each involving a pair
of eigenvalues $(\lambda, 1/\lambda)$, and
in the CU (complex unstable) region, where $KP>0$,
there is a complex quartet of eigenvalues.

\InsertFig[t]{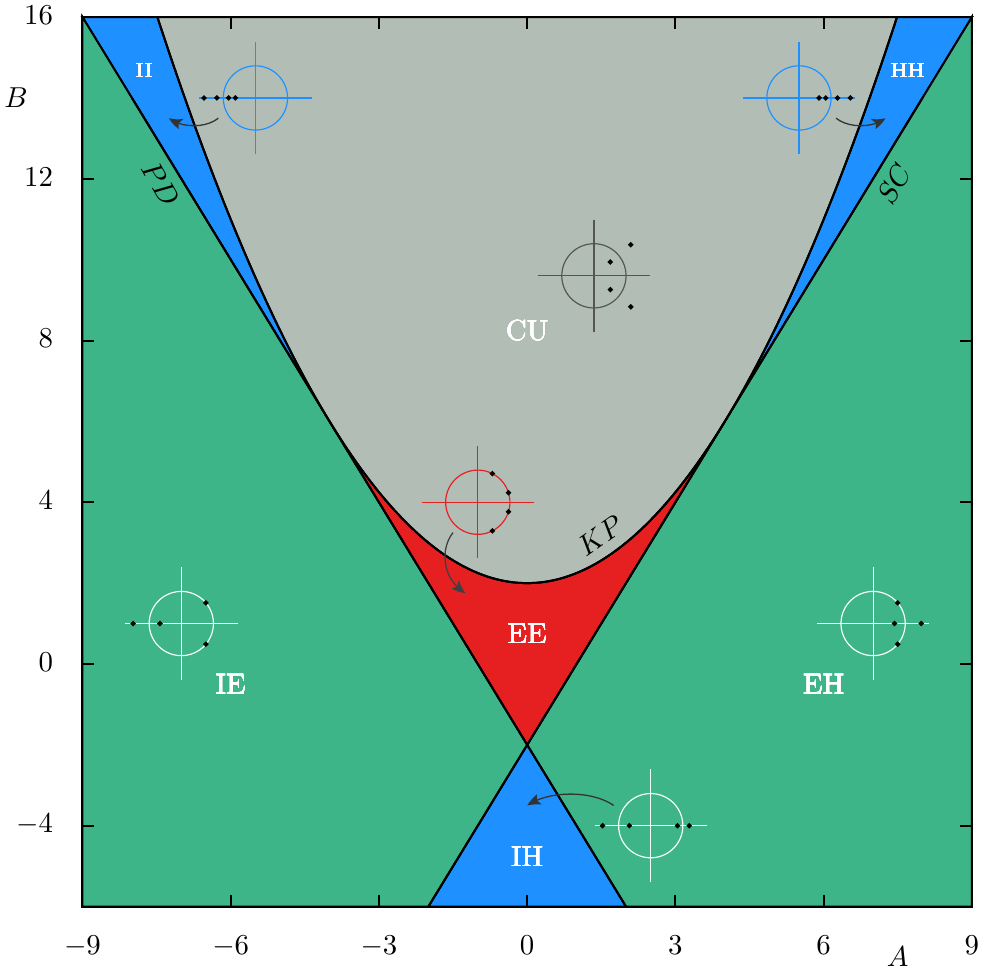}{Stability regions for a
reflexive, quartic, characteristic polynomial.
There are seven regions, EE, EH, IE, IH, II, HH, and CU,
with the configuration of the four eigenvalues
as shown in the representative complex plane insets.}
{ABPlane}{Note:WidthIsDeterminedFromPdfFile}

For a general symmetric $W$, the stability parameters are
\begin{align*}
	A &= \eps_1(w_{22} \alpha + w_{11}\delta - w_{12} \mu) ,\\
	B &= \eps_1(\det(W) -(\beta-\gamma)^2) + 2 .
\end{align*}
For the matrix \Eq{WMatrix} these become
\bsplit{AB}
	A &= 4 + \eps_1\left[ \delta  c -(\beta-\gamma)^2+(6\delta\eps_2+2\alpha)\xi_1 -2\mu\xi_2\right], \\
	B &= 6 + \eps_1 \left[2 (\delta c - (\beta-\gamma)^2) + 12\eps_2(\xi_1+\delta)\xi_1 +
			2(2\alpha +  c)\xi_1 - 4(\xi_2+\mu)\xi_2)\right] ,
\esplit
and the saddle-center and period-doubling parameters are
\bsplit{SCPD}
	SC &= \eps_1 \det(D^2U) = \eps_1(12\eps_2\xi_1^2 + 2c \xi_1 - 4 \xi_2^2),\\
	PD &= 16+ 4\eps_1\left(\delta c -(\beta-\gamma)^2 + 3\eps_2\xi_1^2 + (6\delta\eps_2 + 2\alpha + \tfrac12 c)\xi_1 -\xi_2^2-2\mu \xi_2\right).
\esplit
In particular, note that the sign of $SC$ depends on the fixed points only through the sign of the Hessian of the potential \Eq{UPotential}; for example, if $\eps_1 = 1$, then $SC >0$ at extrema and $SC < 0$ at saddle points of $U$.
Finally, the Krein parameter is
\bsplit{Krein}
	KP =& -\tfrac14(\delta c - (\beta-\gamma)^2)^2  \\
	   &+  (c\delta - (\beta-\gamma)^2) (\mu\xi_2 -(3\delta\eps_2 +\alpha)\xi_1) + 2\eps_1c\xi_1 \\
	   &+(12\eps_1\eps_2 -(3\delta\eps_2+\alpha)^2)\xi_1^2
	    +2\mu(3\delta\eps_2+\alpha)\xi_1\xi_2
	    -(\mu^2+4\eps_1)\xi_2^2.
\esplit

At the quadfurcation, where $a=b=c=\xi_1 = \xi_2 = 0$, \Eq{AB} gives
\bsplit{ABQuad}
	A^\txtQ &= 4-\eps_1(\beta-\gamma)^2 ,\\
	B^\txtQ &= 6 -2\eps_1(\beta-\gamma)^2 .
\esplit
This implies that the quadfurcation point lies on the saddle-center line; indeed from \Eq{SCPD}, $SC = 0$ at this
point. When $\eps_1 =1$ ($\eps_1 = -1$), then $(A^\txtQ,B^\txtQ)$ lies below and to the left of (above and
to the right of) the point $(4,6)$. The quadfurcation occurs at $(4,6)$ only when $\beta=\gamma$, i.e., when the
matrix $C$ is symmetric,  see the discussion in \Sec{Symmetric} below. The quadfurcation occurs below the
period-doubling line, i.e., for $PD < 0$, only if $\eps_1 = 1$ and $|\beta-\gamma| > 2$.

More generally, since fixed points are critical points of $U$, \Eq{ShiftedFP}, they can be created or destroyed
only when $\det(D^2U) = 0$, which is equivalent to $SC = 0$ by \Eq{SCPD}. This can also be seen upon computing the \textit{resultant} of $SC$ and $P$ \Eq{PofV}---recall that the resultant gives the set of parameters on which two polynomials simultaneously vanish. This resultant is proportional to $b^4 (b^2-b_-^2)(b^2-b_+^2)$. Of course, this is what we saw in \Fig{BifSurfaces}---pairs of fixed points are created or destroyed upon crossing the surfaces $b = b_\pm$  \Eq{bPlusMinus}.

\subsection{Quadfurcation along a Line in Parameter Space}\label{sec:QuadLines}

Near the quadfurcation, if we assume that $a,b,c = \cO(\Delta)$ for $\Delta \ll 1$, then $\xi_i= \cO(\sqrt{\Delta})$,  the cubic term involving $c$ in \Eq{PofV} is negligible to lowest order, and the fixed points are given by \Eq{cZeroFPs} to $\cO(\sqrt{\Delta})$.
Substitution into the stability criteria then gives
\begin{subequations} \label{eq:OrderDeltaStab}
\begin{align}
	SC &= \mp 4\eps_1 \sqrt{a^2-3\eps_2 b^2} + \cO(\Delta^{3/2})
             \label{eq:OrderDeltaStab-SC}, \\
	PD &= 16 - 4\eps_1(\beta-\gamma)^2 + 8 \eps_1[( 3\delta \eps_2 + \alpha)\xi_1 -\mu \xi_2]\sqrt{\Delta}+\cO(\Delta) ,
              \label{eq:OrderDeltaStab-PD} \\
    KP&=
       (\beta-\gamma)^2\left[ -\tfrac14 (\beta-\gamma)^2+ (3\delta\eps_2 +\alpha)\xi_1-\mu \xi_2\right] + \cO(\Delta).
    		  \label{eq:OrderDeltaStab-K}
\end{align}
\end{subequations}
Note that the $\mp$ signs in \Eq{OrderDeltaStab-SC}
correspond to the inner $\pm$ in \Eq{cZeroFPs}, the sign inside the square root.
Using these results, we can get an overview of all possible
stability scenarios of the fixed points created in a quadfurcation,
see \Tbl{overview} and \Fig{ABDiagrams}. As the quadfurcation point shifts along the $SC$ line, different
stabilities occur, but since $SC \sim \Delta$ and generically $PD-PD^\txtQ \sim \sqrt{\Delta}$,
the branches emerge tangentially to the $SC$ line
(the $\cO(\sqrt{\Delta})$ term could vanish, but this is exceptional).
Moreover, the sign of $A-A^\txtQ$ depends on
the choice of the outer $\pm$ sign in \Eq{cZeroFPs}, so this pair of fixed points form a parabolic
curve that is tangent to the $SC$ line at the quadfurcation point.

\begin{table}[b]
\begin{center}
\begin{tabular}{c||c|c||c|c}
$(A^\txtQ, B^\txtQ)$  & \multicolumn{2}{c||}{Condition}    &  \multicolumn{2}{c}{Fixed Points and Stability} \\
     on $SC$    &   $\eps_1$ & $|\beta-\gamma|$   &  $\eps_2=1$, $a < -\sqrt{3}|b|$  & $\eps_2=-1$, $a$ or $b\neq0$  \\\hline
$ > (4, 6)$        & $-1 $  & $\neq 0$            &  2 EH $+$ 2 HH & 2 HH  \\
$ = (4, 6)$        & $\pm 1$ & $0$                &    see \Sec{Symmetric}   & see \Sec{Symmetric}   \\
$ < (4, 6)$        & $1$    &  $<2$           &  2 EE $+$ 2 EH & 2 EH  \\
$ = (0,-2)$		   & $1$    & $2$ & IE $+$ EE $+$ IH $+$ EH & IH $+$ EH \\
$ < (0, -2)$       & $1$   &$> 2$ &  2 IE $+$ 2 IH & 2 IH   \\
\end{tabular}
\end{center}
\caption{Overview of the location of the quadfurcation along the $SC$ line depending on the the value of $\eps_1$ and
the asymmetry of $C$. Stabilities of the fixed points are shown in the last two columns for a path of the form $(a,b,c) = \Delta(a^*,b^*,c^*)$, which has a quadfurcation at $\Delta = 0$.
For $\eps_2 = 1$, four fixed points are created as $a$ becomes negative if $a < -\sqrt{3}|b|$.
Their stabilities are shown in column four.
When $\eps_2=-1$, two fixed points exist whenever $a$ or $b \neq 0$ and collide at $\Delta = 0$;
their stabilities are shown in the last column.
\break
}
\label{tbl:overview}
\end{table}

Finally, note that when $\beta \neq \gamma$, then $KP < 0$ at the quadfurcation, and hence a direct
transition to the complex unstable (CU) region is only possible in the symmetric case for
which $(A^\txtQ,B^\txtQ) = (4,6)$; this is discussed in \Sec{Symmetric}.

\InsertFig{stabil_quadfurcations}{Stability of fixed points for
$(a,b,c) = \Delta(1.5,0.5,1)$ with $\Delta \in [-1, 1]$
and different choices of $C$,  $\eps_2$, and $\eps_1$:
(a, b) Case $(\alpha, \mu,\delta, \eps_1) = (1,0.1,0.5,1)$.
Since $|\beta-\gamma| \approx 1.42$, the quadrupling occurs at
$(A^\txtQ,B^\txtQ) = (1.99,1.98)$.
(a) $\eps_2=+1$: creation of four fixed points in the transition \Eq{TwoEETwoEH} for $\Delta <0$.
(b) $\eps_2=-1$: the transition \Eq{TwoEHtoTwoEH}, before and after the bifurcation one has two fixed points.
(c, d) Case  $(\alpha, \mu, \delta) = (1, 2.0, -0.5)$, $\eps_1=-1$,
giving  $(\beta, \gamma) \simeq (1.71, 0.29)$
and the quadrupling occurs at $(A^\txtQ,B^\txtQ) = (6, 10)$.
(c) $\eps_2=1$: creation of four fixed points \Eq{TwoEHTwoHH} for $\Delta <0$.
(d) $\eps_2=-1$: the transition \Eq{TwoHHtoTwoHH} two fixed points before and after the bifurcation.
The arrows indicate the direction towards more negative $\Delta$
and in (b, d) the branches for $\Delta>0$ are shown as dashed lines.}
{ABDiagrams}{Note:WidthIsDeterminedFromPdfFile}

For $\eps_2=1$ the basic structure
is the one shown in the fourth column of \Tbl{overview} and \Fig{ABDiagrams}(a, c):
four new branches emerge from a point on the saddle-center line.
In each case, two of the fixed points are above
($SC >0$), and two are below ($SC <0$), this line.
The implication is that
when $\eps_1 = 1$, and $|\beta-\gamma| > 2$ the quadfurcation occurs below the $PD$ line and
corresponds to the transition
$$
	\emptyset \to 2\mbox{ IE} + 2\mbox{ IH} ,
$$
i.e., two of the created fixed points are of type IE and two of type IH (this is not shown in \Fig{ABDiagrams},
but compare with \Fig{ABPlane}).
Perhaps the most interesting quadfurcation creates stable fixed points. This occurs for $\eps_1 = 1$ and $0 < |\beta-\gamma| < 2$, where we have the transition
\beq{TwoEETwoEH}
	\emptyset \to 2\mbox{ EE} + 2\mbox{ EH} .
\eeq
This is the case shown in \Fig{ABDiagrams}(a): as $\Delta$ decreases from zero, the two created EE points move
along the green and black curves in the figure and the two EH points move along the red and blue curves.

Note that one of the EE points in this figure eventually undergoes a Krein bifurcation, moving into the CU region.
The implication is that the Krein signature of this point must have been indefinite when it was created in the quadfurcation, since this signature is constant under parameter variations so long as the stability remains in the interior of the EE-region in \Fig{ABPlane} \cite[Sec.~III]{Howard87}.

Finally, when $\eps_1=-1$ the quadfurcation point is above $(A,B) = (4,6)$ whenever $\beta \neq \gamma$, so the transition is
\beq{TwoEHTwoHH}
	\emptyset \to 2\mbox{ EH} + 2\mbox{ HH} .
\eeq
This case is shown in \Fig{ABDiagrams}(c). Again, a Krein bifurcation, HH $\to$ CU, eventually occurs.

For $\eps_2=-1$ the basic structure is shown in the last column of \Tbl{overview} and in \Fig{ABDiagrams}(b, d).
There are two fixed points before
and after the quadfurcation with positions given by the inner $+$ sign
in \Eq{cZeroFPs}. Using this and \Eq{xi2Ofxi1}, there is no sign choice that smoothly connects the $(\xi_1^*,\xi_2^*)$ branches for $\Delta<0$ to $\Delta > 0$: the fixed points lose their identity when they collide.
The sign choice implies that $\sgn(SC) = -\sgn(\eps_1)$.
When $\eps_1 = 1$, and hence $(A^\txtQ,B^\txtQ) < (4,6)$, the fixed
points both before and after the quadfurcation are below the $SC$ line, so the transition is
\beq{TwoEHtoTwoEH}
	2\mbox{ EH} \to 2 \mbox{ EH}
\eeq
if $|\beta-\gamma|< 2$. As shown in \Fig{ABDiagrams}(b), the fixed points move in towards
the $SC$ line (black and red curves) as $\Delta \to 0^+$ colliding at $\Delta = 0$, and splitting
apart again for $\Delta < 0$.
Similarly, when $|\beta-\gamma|>2$ the transition corresponds to
$$
	2\mbox{ IH} \to 2 \mbox{ IH} .
$$
Finally, when $\eps_1 = -1$ and $(A^\txtQ,B^\txtQ) > (4,6)$, the transition is
\beq{TwoHHtoTwoHH}
	2\mbox{ HH} \to 2\mbox{ HH} ,
\eeq
as shown in \Fig{ABDiagrams}(d).

It is interesting that all of this structure is quite different from what would be expected from a pair of decoupled, area-preserving maps undergoing saddle-center bifurcations, where there can be at most one EE point.
This case corresponds to the special point $(A^\txtQ, B^\txtQ) = (4, 6)$, which will be treated in \Sec{Symmetric} and applied to the case of decoupled maps in \Sec{HenonMaps}.

\subsection{Two to Four Fixed Point Transitions}\label{sec:TwoToFour}
When a parameter path crosses one of the surfaces $b_\pm(a,c)$ then $SC = 0$ in \Eq{SCPD}, and the resulting saddle-center bifurcation typically creates or annihilates a pair of new fixed points, one with E eigenvalues and one with H eigenvalues.
When $\eps_2 = -1$, there are two fixed points outside the wedge between $b_+$ and $b_-$ shown in \Fig{BifSurfaces}(b), and so if the parameter path enters the wedge then two new fixed points are created.
If the path enters the wedge at the quadfurcation point $a=b=c=0$, then the two existing fixed points merge, and the
bifurcation---now a quadfurcation---occurs at the origin.
Depending upon $\beta-\gamma$ and the sign $\eps_1$, the quadfurcation can occur at any point along the $SC$ line, and so a number of different stability cases can arise.

\InsertFig[b]{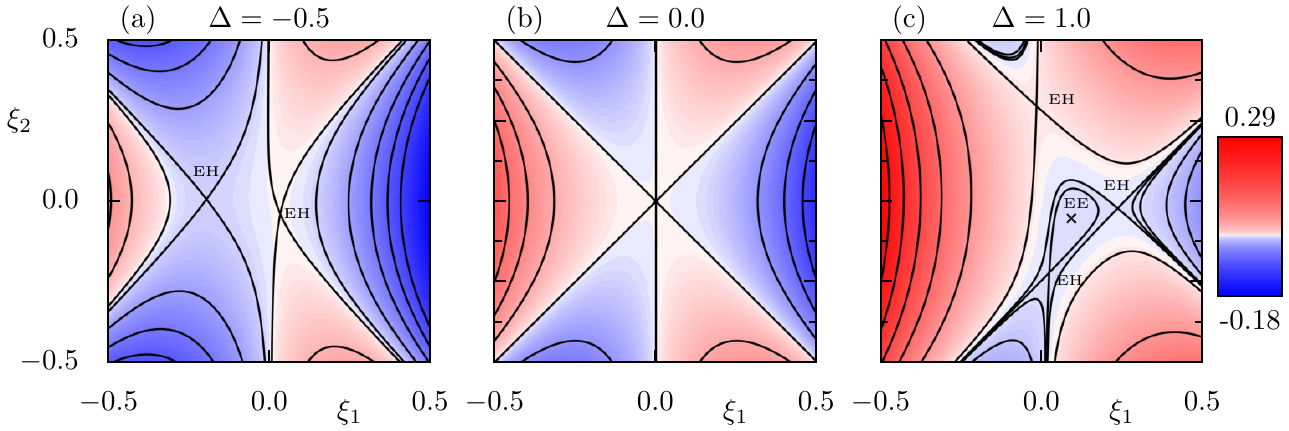}{Contour plots of the
potential $U$, Eq.~\eqref{eq:UPotential},
for the path \eqref{eq:2to4-path} with
$(\alpha, \mu, \delta) = (1, 0.1, 0.5)$, $\eps_1=1$, and $\eps_2 =-1$.
The three panels show $\Delta=-0.5, 0$, and $1$.
Two EH fixed points merge in the quadfurcation at $\Delta=0$ and
for $\Delta>0$ there are four fixed points, one being EE
and three EH.}{potential2to4}{Note:WidthIsDeterminedFromPdfFile}

\InsertFig{stability_2to4}{Stability diagram
for the path \eqref{eq:2to4-path} with $\Delta\in[-2, 5]$
and all other parameters as in \Fig{potential2to4}.
The two EH fixed points for negative $\Delta$ (dashed curves)
merge at the quadfurcation point $\Delta=0$
and lead to four fixed points for $\Delta>0$,
one with EE and three with EH stability. For larger $\Delta$ the EE
point becomes type CU.
}{stability2to4}{Note:WidthIsDeterminedFromPdfFile}

An example is shown in \Fig{potential2to4} for the parabolic path
\begin{equation} \label{eq:2to4-path}
	(a,b,c) = (-0.07 \cdot \Delta |\Delta|, 0.01 \cdot\Delta^2 , \Delta)
\end{equation}
as $\Delta$ varies, with the remaining parameters as shown in the caption.
Figure~\ref{fig:potential2to4} shows the
contours of the potential for $\Delta=-0.5, 0$, and $1$
and \Fig{stability2to4} shows the corresponding stability diagram.
There are two fixed points when $\Delta < 0$ both of type
EH; these merge at $\Delta=0$.
The quadfurcation corresponds to a transition
$$
	2 \mbox{ EH} \to  3 \mbox{ EH } + \mbox{EE} .
$$
Effectively the original EH pair is reformed and the
contour lines near the new EE--EH pair in \Fig{potential2to4}(c)
resemble those for a local saddle-center bifurcation.

\subsection{Krein Collisions, Symmetric $C$ and Reversibility}\label{sec:Symmetric}
As we noted in \Eq{ABQuad}, the quadfurcation occurs for a multiplicity four unit eigenvalue only when the matrix $C$ of \Eq{4DMap} is symmetric. This is the only case in which the quadfurcation can immediately create fixed points of type CU, recall \Fig{ABPlane}.

As discussed in \Sec{SecondDifference} the inverse of the Moser map is conjugate to the original map upon the replacement $C \to C^T$.
Therefore, when $C$ is symmetric, the map \Eq{MoserMap} is reversible, i.e., it is conjugate to its inverse \cite{Lamb98c}:
$S \circ f = f^{-1} \circ S$ for a homeomorphism $S$.
For example, the \Hen~map \Eq{HenonMap} is reversible with $S(x,y) = (y,x)$.
In general, the inverse of \Eq{ShiftedMap} is
$$
	M^{-1}(\xi,\eta) = \left(C^{-1}\eta, \eta + C^T(-\xi + C^{-1}\eta) + \nabla U(C^{-1}\eta) \right) .
$$
This map is conjugate to $M$ when $C=C^T$ using the reversor
$$
	S(\xi,\eta) = (C^{-1}\eta,C\xi) .
$$
Thus, as Moser showed \cite{Moser94}, if $C$ is symmetric the map
\Eq{MoserMap}, or equivalently \Eq{ShiftedMap}, is reversible;
we do not know if the converse of this statement is true.
This reversor is an involution with the fixed set $\mbox{Fix}(S) = \{(\xi,C\xi) : \xi \in \bR^2\}$,
a \twoD{} plane. All of the fixed points are thus symmetric.

\InsertFig[b]{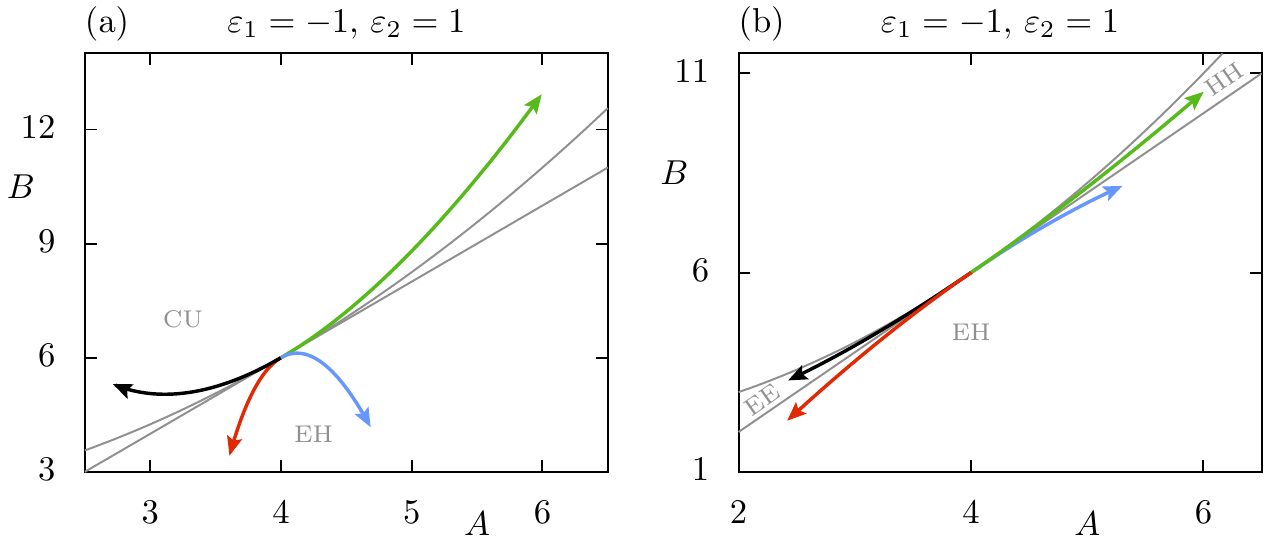}{Stability of fixed points for symmetric $C$ with $\eps_1=-1$ and $\eps_2 = +1$.
The parameters vary along the line $(a,b,c) = \Delta(\tfrac32, \tfrac12,1)$ with $\Delta \in [-1, 0]$.
(a) The transition \Eq{TwoCUTwoEH} for
$(\alpha, \mu, \delta) = (-\tfrac32, 1, \tfrac12)$, giving $\beta=\gamma = \tfrac12$.
(b) The transition \Eq{EEHHTwoEH} for
$(\alpha, \mu, \delta) = (\tfrac32, \sqrt{\tfrac{23}{2}}, \tfrac54)$, giving
$\beta=\gamma = \sqrt{\tfrac{23}{8}}$.
}{ABDiagrams2}{Note:WidthIsDeterminedFromPdfFile}

The quadfurcation is especially interesting in the reversible case since, by \Eq{AB} and \Eq{OrderDeltaStab}, only
then does it occur for the stability parameters $(A^\txtQ,B^\txtQ) = (4,6)$. Indeed as in \Sec{QuadLines}, assuming that $a,b,c = \cO(\Delta)$ and $\xi_i= \cO(\sqrt{\Delta})$ near the quadfurcation, the Krein criterion \Eq{OrderDeltaStab-K} is zero to $\cO(\Delta^{1/2})$, and the first nonzero terms are
$$
	KP = 4\alpha \delta a -2(\alpha+3\eps_2\delta)\gamma b
		  +\left(12\eps_1\eps_2 - (\alpha-3\eps_2\delta)^2\right)\xi_1^{*2}
		  + \cO\left(\Delta^{3/2}, (\beta-\gamma)^2\right)
$$
where $\xi_1^*$ is given by \Eq{cZeroFPs}.
The sign of this parameter can change, depending upon
the details. However, note that since $KP$ depends only on $\xi_1^{*2}$, it does not depend upon the outer
sign in \Eq{cZeroFPs}. Thus when $c = \cO(\Delta)$, the fixed points come in pairs with the same sign of $KP$.

When $\eps_2 = 1$, the four created fixed points come in pairs with opposite signs of $SC$ from
\Eq{OrderDeltaStab-SC}. Thus there will be a pair of fixed points of type EH. Since the curves
generically emerge tangent to the the $SC$ line,
the second pair will both have type CU or one will be  EE and the other HH.
For example, a quadfurcation of the form
\beq{TwoCUTwoEH}
	\emptyset \to 2 \mbox{ CU} + 2 \mbox{ EH}
\eeq
is shown in \Fig{ABDiagrams2}(a).
For this case $KP = \frac14(3\pm7\sqrt{6})\Delta$ to lowest order, where the sign is the inner $\pm$ sign in \Eq{cZeroFPs}, implying that the two fixed points with the $+$ sign have $KP > 0$ when $\Delta < 0$ near the quadfurcation and are thus of type CU.
In contrast, for the example shown in \Fig{ABDiagrams2}(b)
the quadfurcation is
\beq{EEHHTwoEH}
	\emptyset \to \mbox{ EE} + \mbox{ HH} + 2\mbox{ EH}
\eeq
as here $KP = \frac{\Delta}{64} (993 - 84 \sqrt{46} \pm 91 \sqrt{6}) < 0$ when the fixed points exist, $\Delta<0$.
This last case is what would happen in a pair of uncoupled \twoD{} maps, and will be seen below in \Sec{HenonMaps}.

As in \Sec{QuadLines}, when $\eps_2 = -1$, the quadfurcation at $\Delta = 0$ corresponds to a collision and re-emergence of a pair of fixed points with $\sgn(SC) = -\sgn(\eps_1)$. When $\eps_1 = 1$, all of the fixed points will have stability type EH, and so the transition will be
$$
   2 \mbox{ EH} \to 2 \mbox{ EH}
$$
and thus follows the pattern shown in \Fig{ABDiagrams}(b).

However, when $\eps_1 = -1$, then $\sgn(SC) = +1$ implying that EE, HH  and CU are all possible. However, since the two fixed points have the $+$ inner sign in \Eq{cZeroFPs} they will have the same sign of $KP$, so we either have a CU pair or an EE+HH pair. The possible transitions are
\beq{TwoCUtoEEHH}
	2 \mbox{ CU} \to \mbox{ EE} + \mbox{ HH},
\eeq
for which the stability diagram is shown in \Fig{ABDiagrams3}(a)
\beq{EEHHtoEEHH}
	\mbox{ EE} + \mbox{ HH} \to \mbox{ EE} + \mbox{ HH},
\eeq
with stability diagram shown in in \Fig{ABDiagrams3}(b)
and
\beq{TwoCUtoTwoCU}
	2 \mbox{ CU}  \to 2 \mbox{ CU},
\eeq
with stability diagram shown in \Fig{ABDiagrams3}(c).

\InsertFig[b]{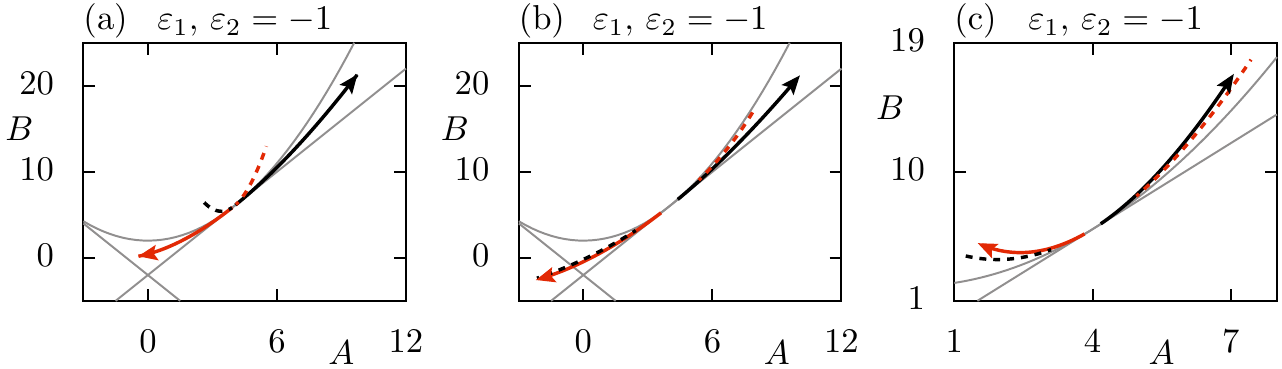}{Stability of fixed points for symmetric $C$ and $\eps_1= \eps_2=-1$.
The parameters $(a,b,c) = \Delta(\tfrac32,\tfrac12,1)$ for (a) and (b), and $\Delta(2,\tfrac12,1)$ for (c),
with $\Delta \in [-\tfrac12, \tfrac12]$. The arrows indicate the direction towards more negative $\Delta$, and
branches for $\Delta>0$ are shown as dashed lines.
(a) The transition \Eq{TwoCUtoEEHH} for $(\alpha,\mu,\delta)=(\tfrac12,\sqrt{6},1)$ giving $\beta=\gamma = \sqrt{\tfrac32}$.
(b) The transition \Eq{EEHHtoEEHH} for $(\alpha,\mu,\delta)=(\tfrac{99}{16}, -\tfrac52, \tfrac{1}{11})$ giving $\beta=\gamma = -\tfrac54$.
(c) The transition \Eq{TwoCUtoTwoCU} for $(\alpha,\mu,\delta)=(-\tfrac32, 1, \tfrac12)$, giving $\beta=\gamma = \tfrac12$.
}{ABDiagrams3}{Note:WidthIsDeterminedFromPdfFile}

\InsertFig[t]{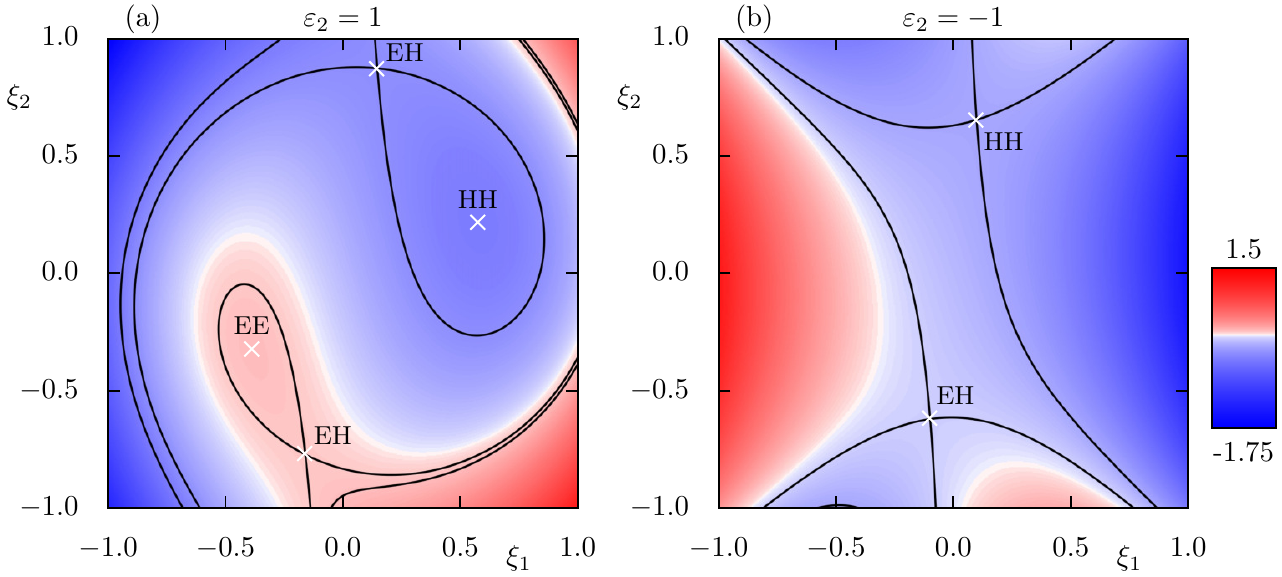}{
Contour plot of the potential $U$, Eq.~\Eq{UPotential}. (a) For $(a,b,c) = (-\tfrac34,-\tfrac14,-\tfrac12)$ and $\eps_2=1$
there are four critical points, which are equilibria of \Eq{HamSymmetric}.
When $C >0$, the maximum, at
$\xi = (-0.38825, -0.32196)$, corresponds to a doubly elliptic equilibrium
and the minimum,
at $\xi = (0.57441,  0.21761)$, to a doubly hyperbolic equilibrium.
As specific example $(\alpha, \mu, \delta) = (\tfrac52, \sqrt{6}, 1)$
is used for the matrix $C$ determining the stabilities.
(b) Potential for $(a, b, c)=(-\tfrac38, -\tfrac18, -\tfrac14)$
and $\eps_2=-1$ where there are two critical points; the matrix $C$, determining the stabilities, is the same as in (a).}{Potential}{Note:WidthIsDeterminedFromPdfFile}

Another technique for analyzing stability in the neighborhood of the quadfurcation is to use
the ODE limit \Eq{ODEs} with the Hamiltonian \Eq{Hamiltonian}.
Recall that this limit assumes that we assume the scaling $(a,b,c) \to (h^4a,h^4 b,h^2c)$,
for $h \ll 1$. This scaling differs from the $\Delta$-scaling by allowing larger relative values for $c$.
The implication of this is that the term $\tfrac12  c q_1^2$ in $U$, which was negligible when
$c = \cO(\Delta)$, is formally important when we take $c = \cO(h^2)$. When $C$ is symmetric,
$\Ca = 0$, and the Coriolis and centripetal-like terms vanish, simply giving
\beq{HamSymmetric}
	H(q,p) = \tfrac12 p^T C^{-1} p - U(q) .
\eeq

When $C = \Cs$ is positive or negative definite, then the stability is governed entirely by the classification of
the critical point of $U$. In particular if $C>0$ then since the potential in \Eq{Hamiltonian} is $-U$, a minimum of $U$ is an HH point, and a maximum is an EE point. Saddles, correspond to EH points. When $C < 0$, the minima are HH and the maxima are EE. This is consistent more generally with \Eq{SCPD}, which shows that $SC = \eps_1 \det(D^2 U)$. Two example contour plots for $U$ are shown in \Fig{Potential}.

\subsection{Elliptic Bubbles}\label{sec:Bubbles}

To visualize the dynamics near the fixed points
of the \fourD{} map
we use a \textit{\threeD{} phase space slice} \cite{RicLanBaeKet2014}.
In its simplest form one considers a thickened
\threeD{} hyperplane  in the \fourD{} phase space defined
by fixing one of the coordinates, e.g., $\eta_2=\eta_2^{\star}$,
to define the slice of thickness $\epsilon$ by
$$
     \left\{ (\xi_1, \xi_2, \eta_1, \eta_2) \; \left| \rule{0pt}{2 ex} \;
         |\eta_2 - \eta_2^*| \le \epsilon \right. \right\},
$$
Whenever the points of an orbit lie within the slice, the remaining
coordinates $(\xi_1, \xi_2, \eta_1)$ are displayed in a \threeD{} plot.
The parameter $\epsilon$ determines the resolution of the resulting plot;
decreasing $\epsilon$ requires the computation of longer trajectories
as the slice condition is fulfilled less often, but the resulting intersections will be more precise.
For example, if a two-torus intersects the hyperplane, it will typically do so in one or more loops. As
$\epsilon$ grows these loops thicken into annuli in the slice.
For further examples and detailed discussion see
\cite{RicLanBaeKet2014,LanRicOnkBaeKet2014,OnkLanKetBae2016,Lange16,AnaBouBae2017,FirLanKetBae2018:p}.

\InsertFig[t]{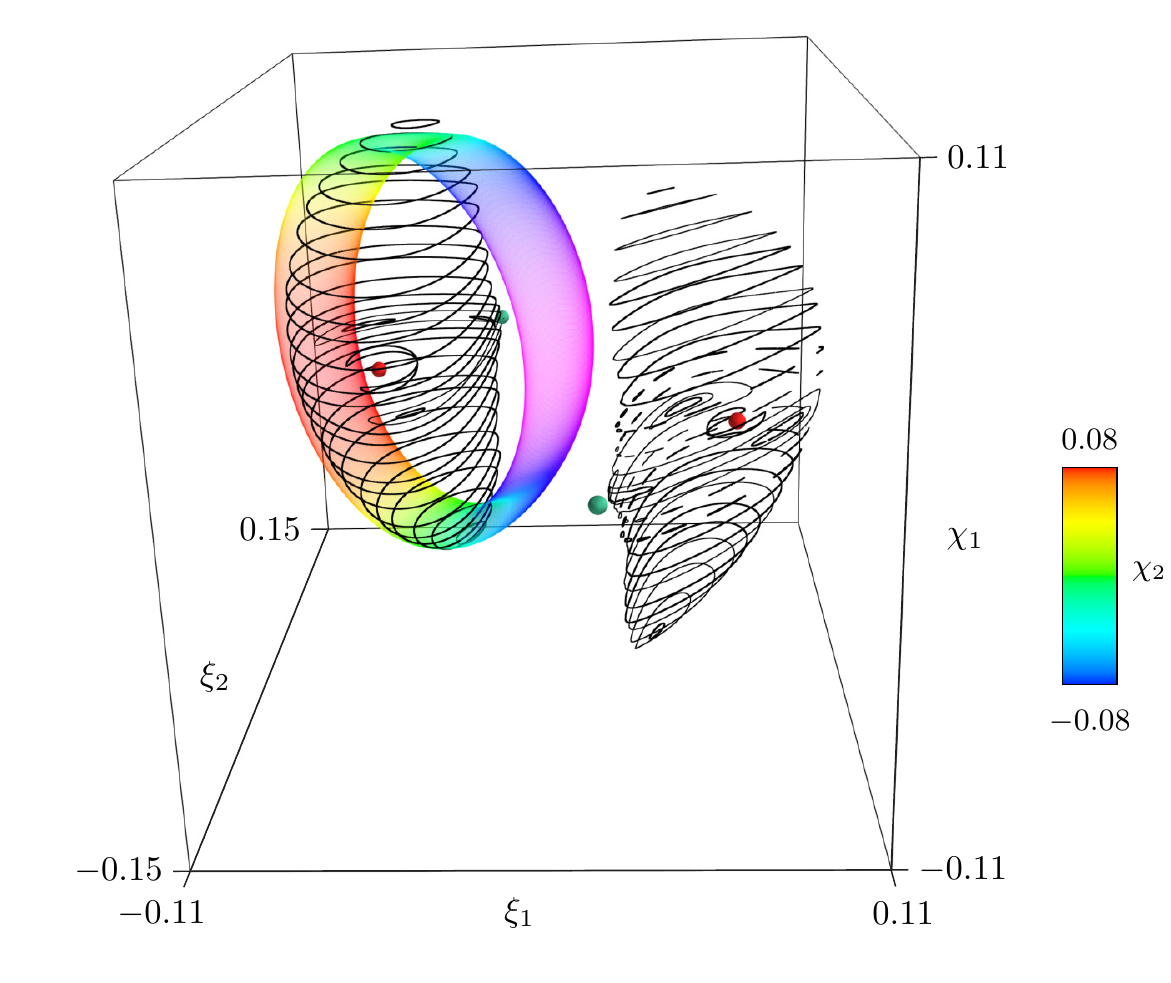}{Three-dimensional phase space slice of the \fourD{} map
             corresponding to  \Fig{ABDiagrams}(a) with parameters $(\alpha, \mu, \delta)=(1, 0.1, 0.5)$,
             $\eps_1=\eps_2=1$, and $(a, b, c)=(-0.015, -0.005, -0.01)$.
             The small spheres show two EE (red) and two EH (green) fixed points.
             Also shown are several selected regular tori (black lines)
             surrounding the EE fixed points.
             Each torus is represented by $10^4$ points in the slice with $\epsilon=10^{-6}$.
             These tori are \twoD{} in the \fourD{} phase space
             and therefore (usually) lead to a pair of loops in the
             \threeD{} phase space slice.
              A projection of one \twoD{} torus is shown as $10^6$ semi-transparent
              points with $\chi_2$ encoded in color (see color bar).
             \movierefall
             }{3d-pss}{WidthIsDeterminedFromPdfFile}

For our purposes a slightly more general, rotated slice, defined so as to contain
all of the fixed points, will be more convenient.
Because the momenta of the fixed points are determined by the coordinates through $\eta=C\xi$,
all fixed points of the \fourD{} map \Eq{ShiftedMap} are
contained in a \twoD{} plane.
Following the ideas of \cite[App.~3]{RicLanBaeKet2014},
we define new coordinates $(\xi,\chi)$, with $\chi = \eta-C\xi$,
so that the fixed points lie in the two-plane $\chi = 0$.
Thus we define the \threeD{} slice
$$
  \Gamma_{\epsilon} =
     \left\{ (\xi_1, \xi_2, \chi_1, \chi_2) \; \left| \rule{0pt}{2 ex} \;
         |\chi_2| \le \epsilon \right. \right\}
$$
so that we get $(\xi_1, \xi_2, \chi_1)$ as \threeD{} coordinates.
Equivalently, this corresponds to using non-orthogonal basis vectors
given as columns of the block matrix
$$
	B = \left( \uv_1 | \uv_2 | \uv_3 | \uv_4 \right)
	  = \begin{pmatrix} I & 0  \\
						C & I
		\end{pmatrix}.
$$
As these are linearly independent, they
can be used to express any point as
linear combination with coefficients
$(\xi_1, \xi_2, \chi_1, \chi_2)$. These coefficients
can be computed from the scalar products with the
dual basis vectors $\{\uv^i\}$, which are the columns of
$$
	B^{-T} = \left( \uv^1 | \uv^2 | \uv^3 | \uv^4 \right)
	  = \begin{pmatrix} I & -C^{T}  \\
						0 & I
		\end{pmatrix}.
$$

Figure \ref{fig:3d-pss} shows an example of a slice for the map \Eq{ShiftedMap}
with the parameters of \Fig{ABDiagrams}(a) when $\Delta = -0.01$.
For these parameters the quadfurcation has created
two EE and two EH fixed points that, by construction
of the \threeD{} slice, lie in the \twoD{} plane $\chi_1=0$.
As expected from KAM theory, the EE fixed points should be surrounded by a Cantor family of two-tori
on which the dynamics is conjugate to incommensurate rotation. By analogy with
Moser's theorem for \twoD{} maps \cite{Moser62}, the density of these tori should approach
one as they limit on the EE points providing that the linearized frequencies
are not in a low-order resonance.
Indeed, the formal normal form expansion
around a nonresonant EE point is an integrable twist map to all orders \cite{Todesco94}, and
higher-dimensional results along the lines of Moser's twist theorem
have been proven for elliptic equilibria of Hamiltonian flows \cite{DelGut1996, Eliasson13}.
In the \threeD{} slice each of these tori becomes
two (or more) thin annular rings, which, since we have set $\epsilon = 10^{-6}$,
appear as \oneD{} loops in the figure \cite{RicLanBaeKet2014}.

This family of two-tori appears to approximately be limited by the locations of the EH fixed points.
Indeed, since the center-stable and center-unstable manifolds of the EH points are \threeD{},
they should form boundaries for the elliptic dynamics. Of course, we expect
there will be chaotic orbits near these manifolds, and so the
regular \twoD{} tori will not extend into the chaotic zone.
The black loops in the plot are a selection of \twoD{} tori
that are close to the boundary of the regular region.
Also shown in the plot is the full orbit of one of these \twoD{} tori, now \textit{projected} onto the slice;
the projected coordinate $\chi_2$ is
encoded in color as indicated in the color bar at the right \cite{PatZac1994}.

\InsertFig[t]{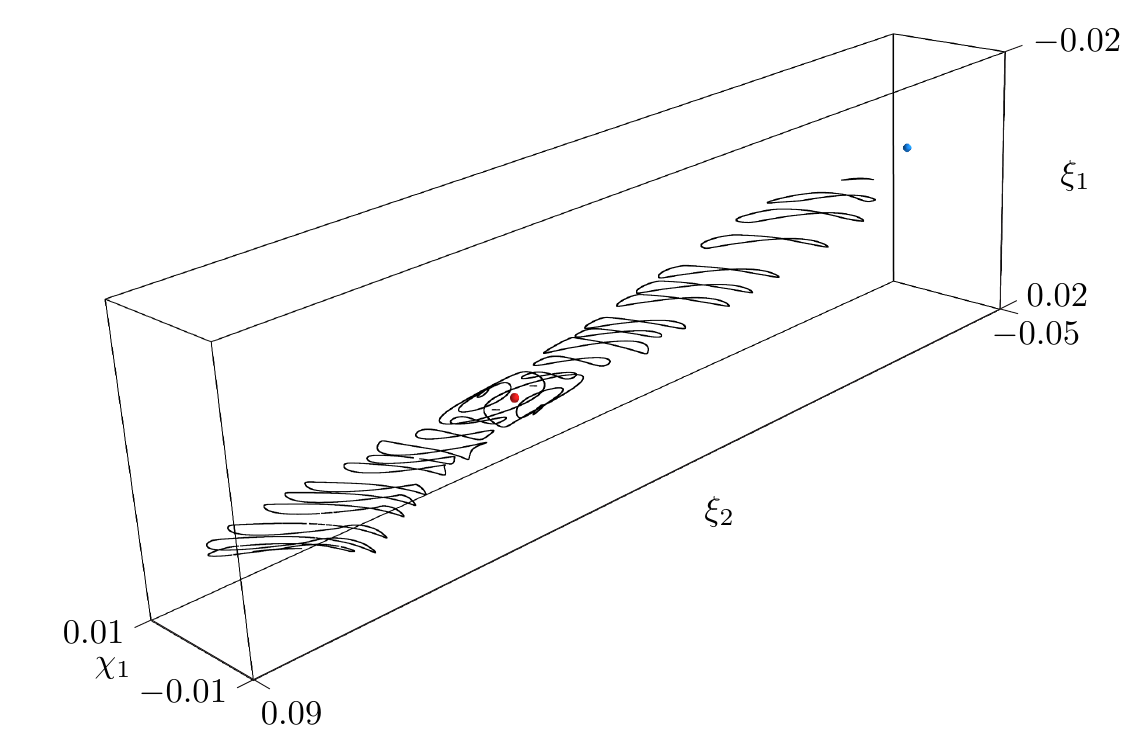}{\threeD{} phase space slice of the \fourD{} map
             for parameters of \Fig{ABDiagrams3}(b) with $\Delta = -0.001$.
             The small spheres show an EE (red) and an HH (blue)
             fixed point.
             Also shown are slices of several selected regular tori as in \Fig{3d-pss}.
             \movierefall
             }{3d-pss-2}{WidthIsDeterminedFromPdfFile}

Figure~\ref{fig:3d-pss-2} shows an example of a slice for the
transition \Eq{EEHHtoEEHH}, when the matrix $C$ is symmetric; the parameters
correspond to \Fig{ABDiagrams3}(b) with $\Delta = -0.001$,
close to the quadfurcation.
For this case there are only two fixed points, one of type EE and the other of type HH.
Here again we see a family of \twoD{} tori surrounding the EE point.
This family has a larger extent in the $\xi_2$ direction than in $\xi_1$,
and the loops shrink in size as they become closer to the HH point.
Note that now the stable and unstable manifolds
of the HH point are two-dimensional, and so do not form barriers in \fourD{}.

\subsection{Bounded Orbits }\label{sec:Bounded}
Moser showed, under a nondegeneracy condition on the quadratic terms, that
the domain of the quadratic map containing bounded orbits is itself bounded \cite{Moser94}.
To obtain an explicit bound we consider the second-difference form \Eq{SecondDiff1}, rewriting it as
\beq{OrderedDiff}
	C^T \xi_{t+1} + C \xi_{t-1} =  A + D \xi_t + Q(\xi_t),
\eeq
where $A = (a,b)^T$ is a constant vector and the linear and quadratic terms are
\begin{align*}
	D \xi &\equiv \begin{pmatrix} 2\alpha + c & \mu \\ \mu & 2\delta \end{pmatrix}
				\begin{pmatrix} \xi_1 \\ \xi_2 \end{pmatrix} ,\\
	Q(\xi) &\equiv \begin{pmatrix} 3\eps_2 \xi_1^2 + \xi_2^2 \\ 2 \xi_1 \xi_2 \end{pmatrix}.
\end{align*}
 Using this form we can prove the following.

\begin{thm}\label{thm:Bounded}
When $\eps_2 = \pm1$, all bounded orbits of the map \Eq{OrderedDiff} are contained in
the disk $\|\xi\| \le \kappa$, where
\beq{K1Define}
	\kappa \equiv \tfrac{1}{2\tau} \left(\kappa_2 + 2\kappa_3  + \sqrt{ (\kappa_2+2\kappa_3)^2 +4\tau \sqrt{a^2+b^2}} \right),
	\\\
\eeq
with $\tau = 1$ for $\eps_2 = 1$ and $\tau = \sqrt{\tfrac23}$ for $\eps_2 = -1$,
and we define
\bsplit{K23Define}
	\kappa_2 &\equiv \|D\|_F = \sqrt{(2\alpha+c)^2 + 2\mu^2 + 4\delta^2} , \\
	\kappa_3 &\equiv \|C\|_F = \sqrt{\alpha^2 + \beta^2 + \gamma^2 + \delta^2 }.
\esplit
\end{thm}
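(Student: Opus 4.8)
The plan is to exploit the fact that along a bounded orbit the supremum $R := \sup_{t}\|\xi_t\|$ is finite, and to evaluate the orbit relation \Eq{OrderedDiff} at a time where $\|\xi_t\|$ is (nearly) maximal. The left-hand side is always controlled linearly: using that the spectral norm is transpose-invariant and dominated by the Frobenius norm, $\|C^T\|_{op}=\|C\|_{op}\le\|C\|_F=\kappa_3$, together with $\|\xi_{t\pm1}\|\le R$, the triangle inequality gives $\|C^T\xi_{t+1}+C\xi_{t-1}\|\le 2\kappa_3 R$. For the right-hand side I would use the reverse triangle inequality with $\|D\xi_t\|\le\kappa_2\|\xi_t\|$ and $\|A\|=\sqrt{a^2+b^2}$ to obtain $\|A+D\xi_t+Q(\xi_t)\|\ge\|Q(\xi_t)\|-\kappa_2\|\xi_t\|-\sqrt{a^2+b^2}$. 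The whole argument then hinges on bounding the quadratic part $Q$ from below by a multiple of $\|\xi\|^2$.

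The key step, and the one genuinely nontrivial computation, is the sharp inequality $\|Q(\xi)\|\ge\tau\|\xi\|^2$ with the stated values of $\tau$. I would establish this by minimizing the homogeneous ratio $\|Q(\xi)\|^2/\|\xi\|^4$. Writing $s=\xi_1^2$, $u=\xi_2^2$ and normalizing $s+u=1$, one has $\|Q\|^2=(3\eps_2 s+u)^2+4su$, which reduces to $8s+1$ when $\eps_2=1$ and to $12s^2-4s+1$ when $\eps_2=-1$. The former is minimized at $s=0$, giving minimum $1$ and hence $\tau=1$; the latter has an interior minimum at $s=\tfrac16$ with value $\tfrac23$, giving $\tau=\sqrt{\tfrac23}$. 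These are exactly the two constants in the statement, and it is the interior minimum in the $\eps_2=-1$ case that produces the factor $\sqrt{2/3}$.

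Combining the two sides at the maximizing time yields the scalar inequality $\tau R^2-(\kappa_2+2\kappa_3)R-\sqrt{a^2+b^2}\le 0$. Since this is quadratic in $R$ with positive leading coefficient $\tau$, solving for the positive root gives $R\le\kappa$ with $\kappa$ exactly as in \Eq{K1Define}, where $\kappa_2,\kappa_3$ are the Frobenius norms of \Eq{K23Define}. Because $R$ bounds $\|\xi_t\|$ for every $t$, this establishes the claimed containment in the disk $\|\xi\|\le\kappa$.

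The remaining obstacle is technical rather than conceptual: along a bi-infinite orbit the supremum $R$ need not be attained. I would circumvent this by choosing a sequence of times $t_k$ with $\|\xi_{t_k}\|\to R$; for each $k$ one still has $\|\xi_{t_k\pm1}\|\le R$, so the linear bound $2\kappa_3 R$ on the left-hand side persists, while the lower bound on the right-hand side is continuous in $\|\xi_{t_k}\|$. Passing to the limit $k\to\infty$ recovers the quadratic inequality $\tau R^2-(\kappa_2+2\kappa_3)R-\sqrt{a^2+b^2}\le 0$, after which the proof concludes as above.
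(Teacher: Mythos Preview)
Your proof is correct, and the lower bound $\|Q(\xi)\|\ge\tau\|\xi\|^2$ with the stated values of $\tau$ is exactly the nontrivial ingredient the paper also isolates. However, the global argument you use is genuinely different from the paper's. You argue directly from the supremum $R=\sup_t\|\xi_t\|$: evaluating the orbit relation at a near-maximal time and bounding the left side by $2\kappa_3R$ yields the scalar quadratic inequality $\tau R^2-(\kappa_2+2\kappa_3)R-\|A\|\le 0$, which immediately gives $R\le\kappa$. The paper instead derives the pointwise inequality $\kappa_3(\rho_{t+1}+\rho_{t-1})\ge \tau\rho_t^2-\kappa_2\rho_t-\|A\|$ and uses it to show that whenever $\rho_t>\kappa$ one has $\rho_{t+1}+\rho_{t-1}>2\rho_t$; a discrete convexity/monotonicity argument (splitting on whether $\rho_t\ge\rho_{t-1}$ or not) then forces the sequence $\{\rho_t\}$ to diverge in one time direction, so any orbit ever exceeding $\kappa$ is unbounded. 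Your route is shorter and avoids the case analysis; the paper's route is slightly more informative in that it exhibits the actual escape mechanism (monotone growth of $\rho_t$) rather than just the contrapositive bound.
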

\begin{proof}
When $\eps_2 =1$ the norm of the quadratic terms has the lower bound
$$
	\|Q(\xi)\|^2 = 9\xi_1^4 + 10 \xi_1^2 \xi_2^2 + \xi_2^4 = (9\xi_1^2 + \xi_2^2)\|\xi\|^2
				\ge \|\xi\|^4 = \rho^4 ,
$$
where we denote $\|\xi\| = \rho$.

Using the triangle inequality on \Eq{OrderedDiff} gives
\beq{Triangle1}
 \|C^T \xi_{t+1}\| + \|C \xi_{t-1}\| \ge  \|Q(\xi_t)\| - \|D \xi_t \| - \|A\| .
\eeq
Define $\kappa_{2,3} >0 $ so that
$$
	\|D \xi\| \le \kappa_2 \rho ; \quad
	\|C^T \xi\|, \|C \xi\| \le \kappa_3 \rho .
$$
For example we can use the Frobenius norms of these matrices to give \Eq{K23Define}
(or the operator norm, in terms of the singular values).
Putting the bounds into \Eq{Triangle1} gives
\beq{Triangle}
 \kappa_3 ( \rho_{t+1} + \rho_{t-1} )  \ge \rho_t^2- \kappa_2 \rho_t - \|A\| .
\eeq
Let $\kappa>0$ be chosen such that whenever $\rho > \kappa$, then
$ \rho^2  -\kappa_2 \rho  -\|A\| > 2\kappa_3\rho$. Solving this quadratic, as an equality, gives \Eq{K1Define}.
Using this in \Eq{Triangle} implies that whenever $\rho_t > \kappa$,
\beq{DiffIneq}
	\rho_{t+1} + \rho_{t-1} > 2\rho_t .
\eeq
Now there are two possible cases:
\begin{itemize}
	\item Suppose that $\rho_t \ge \rho_{t-1}$. Then by \Eq{DiffIneq}
whenever $\rho_t > \kappa$, we have $\rho_{t+1} > 2\rho_t - \rho_{t-1} \ge \rho_t$. This implies that
the sequence $\{\rho_t\}$ is strictly increasing with $t$. If this monotone sequence
is bounded, it must approach a limit $\rho_t \to \rho^*$, which must be a solution of \Eq{Triangle}
as an equality. But this implies $\rho^* = \kappa$, and we have assumed $\rho_t > \kappa$. Thus $\{\rho_t\}$
is unbounded as $t \to \infty$.

	\item Alternatively, suppose that $\rho_t < \rho_{t-1}$. Again, by
\Eq{DiffIneq} whenever $\rho_{t+1} > \kappa$ we have $\rho_{t-1} > 2\rho_t - \rho_{t+1} > \rho_{t}$.
Thus the sequence $\{\rho_t\}$ strictly increases as $t$ decreases. Again, this implies that $\rho_t$
is unbounded, now as $t \to -\infty$.
\end{itemize}
Together, these imply the theorem when $\eps_2 = 1$.

If $\eps_2 = -1$, then we can see that $\|Q(\xi)\|^2 \ge \tfrac23 \rho^4$, indeed
$$
	\|Q(\xi)\|^2-\tfrac23 (\xi_1^2+\xi_2^2)^2 = \tfrac{25}{3}\xi_1^4 -\tfrac{10}{3}\xi_1^2\xi_2^2 + \tfrac{1}{3} \xi_2^4 = \tfrac13( 5\xi_1^2 -\xi_2^2)^2 \ge 0 .
$$
Thus the analysis above works, if we replace $\kappa$ by the larger solution to
$$
	\tau\rho^2 -(\kappa_2+2\kappa_3)\rho - \|A\| = 0
$$
with $\tau = \sqrt{\tfrac23}$, giving \Eq{K1Define} again.
\end{proof}

Note that if $\eps_2 = 0$, then
$$
	\|Q(\xi)\|^2 = 4\xi_1^2 \xi_2^2 + \xi_2^4 = \xi_2^2(4\xi_1^2 + \xi_2^2) ,
$$
which does not obey a bound of the form needed in the proof of \Th{Bounded}. Thus the theorem does not apply to this case. Indeed, we showed in \Sec{FixedPoints} that if $\eps_2 =0$, then when $a=b=c=0$ there is a line of fixed points, recall the discussion in \Sec{FixedPoints}.

\InsertFig[t]{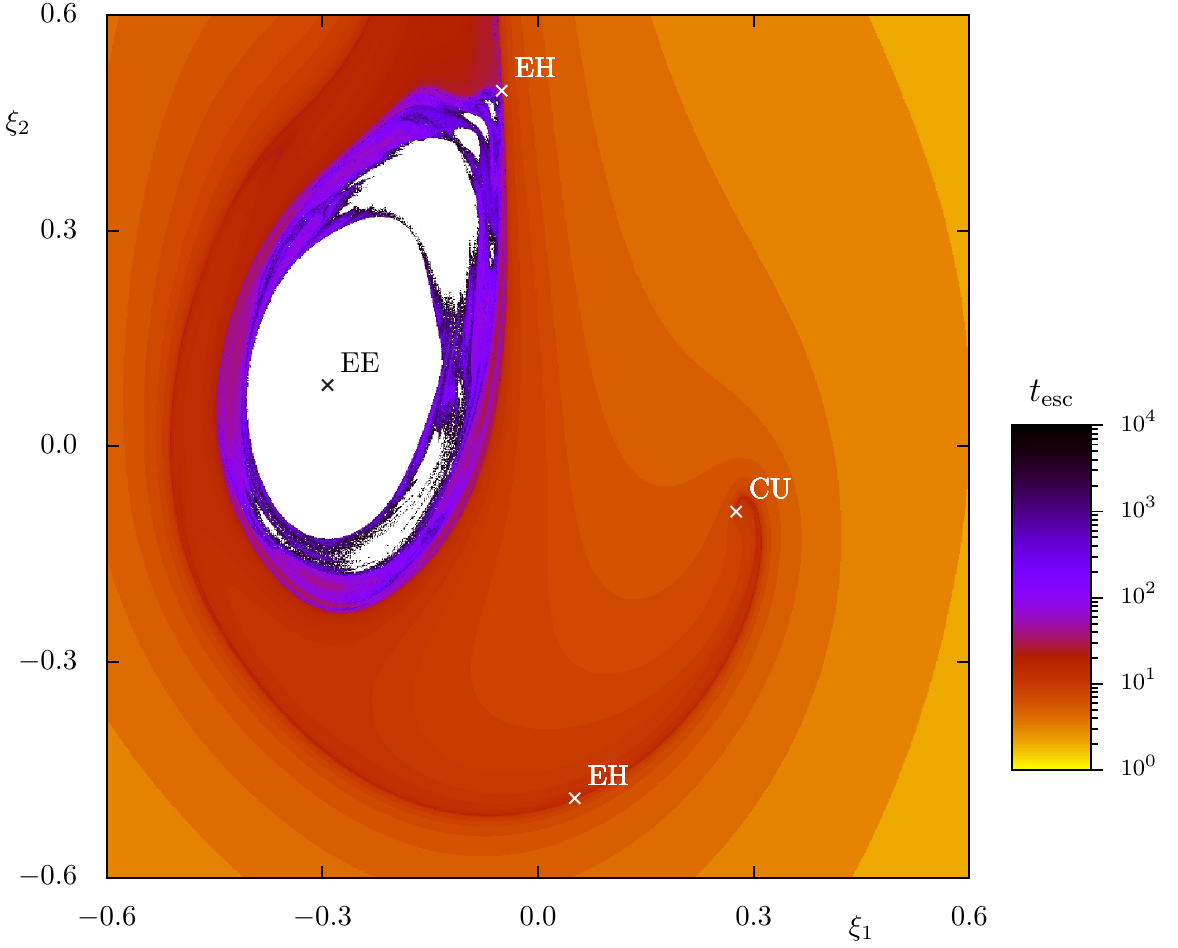}{Plot of the escape time
$\tesc$ encoded in color for initial conditions
defined via $(\xi_1, \xi_2)$ with $\eta = C\xi$. Initial conditions whose orbits have
not escaped within $10^4$ iterations are colored in white.
In the region surrounding the EE fixed point one has a large
region of non-escaping orbits. Further away
one observes a complicated fine-structure
of escaping and non-escaping orbits.
Parameters are: $(\alpha, \mu, \delta)=(1, 0.1, 0.5)$,
$\eps_1=\eps_2=1$, and
$(a, b, c) = (-0.25, 0.05, 0.05)$.
}
{EscapeTimePlot}{WidthIsDeterminedFromPdfFile}

A good way to visualize the distinction between bounded
and unbounded orbits for a given parameter set
is an escape time plot, see \Fig{EscapeTimePlot}.
In this plot, a grid of initial points of the form $(\xi,\eta) = (\xi, C\xi)$
are iterated until $\| \xi \| > \kappa$
and the required time to escape is encoded in color.
Points that have not escaped within $10^4$ iterations are displayed in white.
Some of these points lie on
a family of regular \twoD{} tori in the neighborhood of the EE fixed point; these will never escape.
Points near the boundary of the white region
may eventually escape, and indeed, even arbitrarily close to an EE fixed point there are
initial conditions that are expected to escape for extremely large times
by means of Arnold's exponentially-slow diffusion mechanism \cite{Arnold64,Chi1979,Lochak99,Dum2014}.

We can quantify the size of the region of ``bounded orbits" under parameter
variation by computing the area of the white region
in a \twoD-plane of initial conditions like that in \Fig{EscapeTimePlot}.
For this we choose initial conditions in the two-plane $(\xi,C\xi)$,
with $\xi$ varied on a grid of $3000\times3000$ points within the box
$ -\tfrac12 \kappa < \xi_i < \tfrac12 \kappa$, and iterate at most $5000$ steps. Orbits that remain within the disk
$\| \xi \| \le \kappa$ are counted, and the resulting area is denoted $\Areg$.
This area varies as fixed points undergo various bifurcations.
A \oneD{} cut through parameter space,
\Fig{BoundedOrbitsCut}, shows how $\Areg$ varies with the parameter $a$ (here $b=c=0.01$).
Insets in the figure show escape time plots,
like that in \Fig{EscapeTimePlot}, for some selected parameters.
There is a strong correlation of the area with the structure of the region of
stable orbits around the EE point, as we discuss further below.

\InsertFig[t]{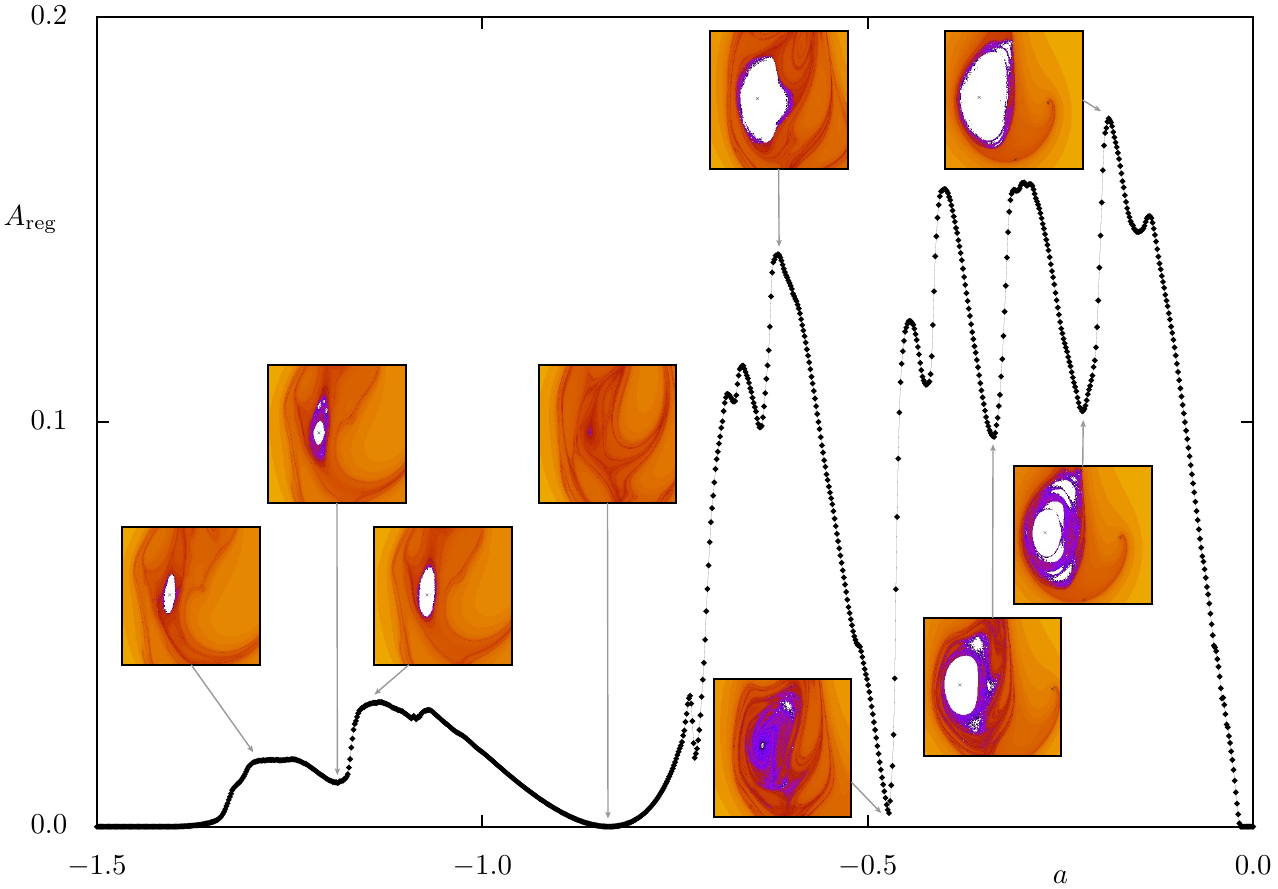}
{Area of bounded initial conditions $(\xi,C\xi)$, determined
from a grid in the $\xi$-plane for $|\xi_i| < \kappa/2$,
under variation of $a$ with $b=c=0.01$, and $\eps_2 = 1$;
the matrix $C$ is as in \Fig{EscapeTimePlot}.
The insets show escape time plots for a neighborhood
of the EE fixed point. Note that for all $a$-values in the figure
there are four fixed points (shown as
small black crosses in the insets), which, as $a$ decreases, move out of the shown square.
A movie of escape time plots as a function of $a$
can be found in the supplementary material at
\href{http://www.comp-phys.tu-dresden.de/supp/}{http://www.comp-phys.tu-dresden.de/supp/}.}
{BoundedOrbitsCut}{WidthIsDeterminedFromPdfFile}

In \Fig{BoundedOrbitsNew} we show $\Areg$ as a function of the two parameters $a$ and $b$, setting $c=b$.
Note that there are apparently no bounded orbits when $b > b_+$ \Eq{bPlusMinus}, where there are no fixed points,
nor when $b \gtrsim 0.6$ or $a < -1.5$. The largest bounded area occurs in the region near the origin
in the $(a,b)$ plane; recall that the origin corresponds
to the quadfurcation since $c=b$.
The dotted curve corresponds to parameters
for which $PD=0$, the period-doubling bifurcation \Eq{OrderDeltaStab-PD}.
To the left of the $PD$ curve, the EE fixed point becomes IE and
$\Areg$ decreases quickly to 0.
To the right of the $PD$ curve and for $b< b_-$,
the four fixed points have the stabilities EE, EH, EH, and CU.
On the curve $b=b_-$ the EH and CU fixed points coalesce; however, since there are no bounded,
regular orbits in a neighborhood these fixed points, this transition does not influence $\Areg$.

\InsertFig[t]{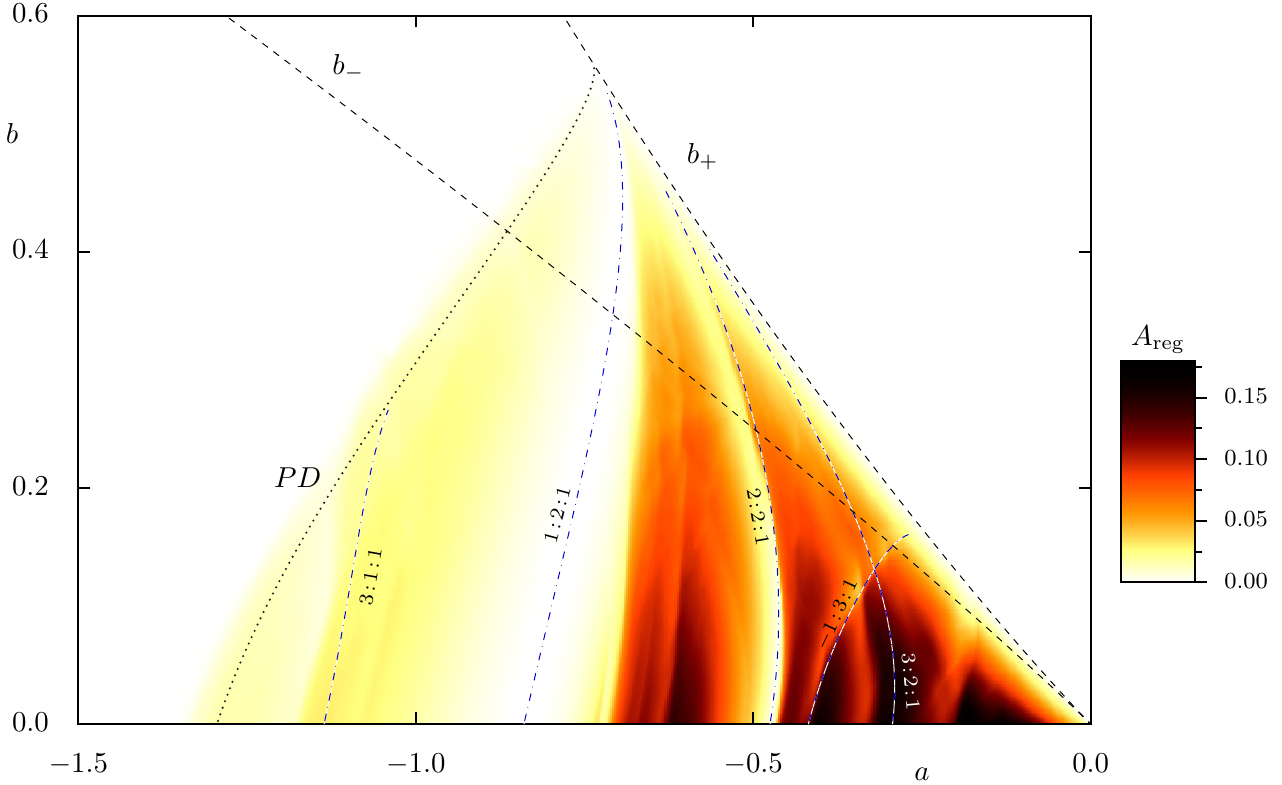}
 {Area $\Areg$
  of bounded initial conditions as a function of $(a,b)$ with $c = b$ and $\eps_2 = 1$.
  The matrix $C$ is defined by $(\alpha, \mu, \delta) = (1, 0.1, 0.5)$, and
  $\eps_1=1$, as in \Fig{ABDiagrams}(a, b).
  The curves $b_\pm$ represent the crossing of the surfaces
  \Eq{bPlusMinus} with the plane $c = b$.
  Thus there are four fixed points when $b$ is below $b_-$ and
  two when $b_- < b <b_+$.
  Upon crossing the (dotted) $PD$ curve from right to left, the EE fixed point becomes
  IE and $\Areg$ rapidly drops to zero.
  The blue dashed-dotted curves show several resonances \Eq{Resonance}, labeled $n_1:n_2:m$, of the EE fixed point.}{BoundedOrbitsNew}{Note:WidthIsDeterminedFromPdfFile}

At several places in \Fig{BoundedOrbitsCut}
and along several curves in \Fig{BoundedOrbitsNew} one observes a substantial decrease in the bounded area.
Several of these can be related to those parameters
for which the linearization about the elliptic-elliptic
fixed point fulfills a low-order resonance.
For such a point the four eigenvalues have the form $\lambda_{1,2} = e^{2\pi i \nu_{1,2}}$, and
the conjugate/inverse values $\bar{\lambda}_{1,2} = 1/\lambda_{1,2}$.
Each frequency, $\nu_{1,2}$, describes the rate of rotation around the
fixed point in the \twoD{} invariant planes
spanned by the eigenvectors of the corresponding conjugate pair of eigenvalues.
These can be written in terms of the partial traces, $\rho_{1,2}$, recall \Eq{PartialTraces}, as
$$
  \nu_{1,2} = \tfrac{1}{2\pi} \arccos\left(\tfrac12 \rho_{1,2} \right) .
$$
The frequencies $(\nu_1, \nu_2)$ of an EE fixed point fulfill
a resonance condition when
\begin{equation} \label{eq:Resonance}
     n_1 \nu_1 + n_2 \nu_2 = m, \quad (n_1, n_2) \in \bZ^2\backslash \{0, 0\}, \,\, m\in\bZ .
\end{equation}
Without loss of generality, we can set  $\text{gcd}(n_1, n_2, m)=1$, and $m \ge 0$. We refer to such
a resonance as an $n_1 : n_2 : m$ resonance.

While resonances are dense in frequency space, those with small order, $|n_1| + |n_2|$,
are of particular relevance. For example the large white region in \Fig{BoundedOrbitsNew} starting for $b=0$
at $a \approx -0.75$ corresponds to the $1:2:1$ resonance.
This is also manifested in the broad minimum with $\Areg\approx 0$
in \Fig{BoundedOrbitsCut}.
The other prominent minimum near $a \approx -0.5$
is caused by the $2:2:1$ resonance.
In these two cases $\Areg$ is reduced for
parameters near those fulfilling the resonance condition.
For other indicated resonances, i.e., $3:2:1$, $-1:3:1$
and $3:1:1$,
the density is only reduced on one side of the bifurcation.
In the examples this happens for smaller $a$, and sometimes---as for the
$3:2:1$ resonance---it occurs quite some distance away.

Higher order resonances, $|n_1|+| n_2| \ge 5$, should be less important in changing $\Areg$.
The results of \cite{Todesco94} lead to the expectation that the EE point remains stable, and that
for a single resonance the bifurcation creates a pair of invariant \oneD{} tori,
one normally hyperbolic and one normally elliptic (at least in the normal form).
Further away from the bifurcation of the EE fixed point the
geometry is described by bifurcations of families of \oneD{} tori,
see \cite{OnkLanKetBae2016} and references therein.
When the frequency passes through a double resonance, so that $\nu_i = p_i/q_i$ are rational, then one expects four periodic orbits to be created \cite{Todesco94, Gelfreich13}.
According to \cite{Todesco94} their stability is either
EE $+$ 2 EH $+$ HH or 2 EH $+$ 2 CU, see \cite{LanRicOnkBaeKet2014} for an illustration
of the geometry in the first case.

\InsertFig[t]{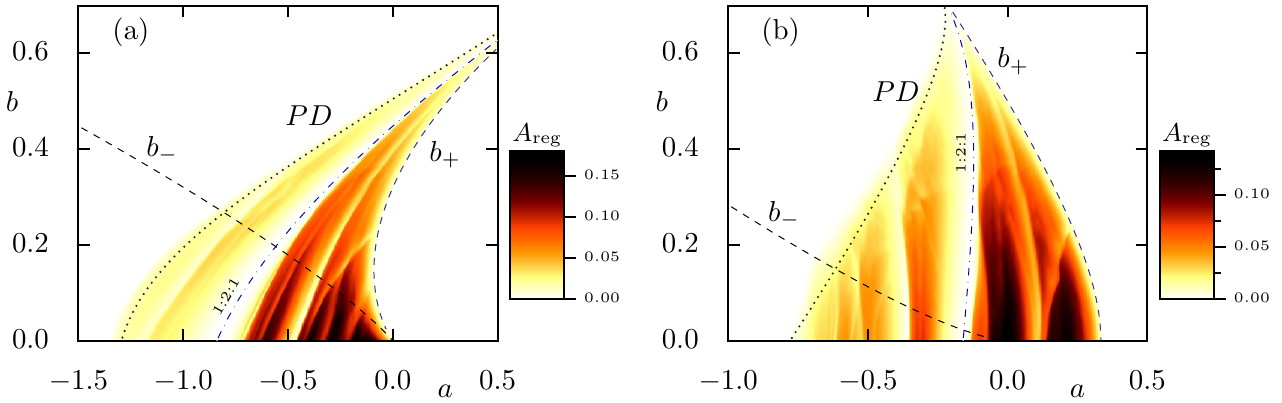}
{Area $\Areg$ of initial conditions on the \twoD{} plane $\eta = C\xi$
inside the box $|\xi_i| \le \kappa/2$ (on a $3000 \times 3000$ point grid)
that remain within the disk $\|\xi_t\| \le \kappa$ for $t\le 5000$.
Variation of $(a,b)$ with (a) $c = 5b$
and (b) $c=2$.
The remaining parameters are the same as in \Fig{BoundedOrbitsNew}.
The curves $b_\pm$ represent the crossing of the surfaces \Eq{bPlusMinus} with the plane $c = b$ or $5b$, respectively.
Thus there are four fixed points when
$b$ is below $b_-$ and two when $b$ is between $b_-$ and $b_+$.
To the left of the (dotted) $PD$ line the EE fixed point
has become EI.}{BoundedOrbits}{Note:WidthIsDeterminedFromPdfFile}

Two further examples of $\Areg$
are shown in \Fig{BoundedOrbits} with the same parameters as
in \Fig{BoundedOrbitsNew}, except for \Fig{BoundedOrbits}(a) $c=5b$
and for \Fig{BoundedOrbits}(b) the parameter $c=2$ is fixed.
In the latter case the parameter plane no longer intersects the
quadfurcation point ($a=b=c=0$),
so that the curves $b_\pm$ do not intersect at the origin in the figure.
In \Fig{BoundedOrbits}(a) the line $b=0$ corresponds
to the one in \Fig{BoundedOrbitsNew} so that the same resonances are still relevant. These now extend
to the region $b>0$, bending strongly to the right.
The same overall resonance structure is also visible in \Fig{BoundedOrbits}(b).

\section{Coupled \Hen~Maps}\label{sec:HenonMaps}

Since Moser's map, \Eq{MoserMap} or equivalently \Eq{ShiftedMap}, is the generic, four-dimensional
quadratic map, there must be parameters for which it corresponds to a pair of uncoupled quadratic maps.
In this section, we show that this is possible for $\eps_2 = 1$ and special choices
of the matrix $C$, depending on $\eps_1$. The sign $\eps_1$ corresponds to positive and negative Krein signatures.
These \Hen~maps are uncoupled when $c= 0$, but when $c$ is nonzero the resulting coupling is,
as we will see in \Sec{Accelerator}, precisely what is needed to describe the dynamics in
the neighborhood of an accelerator mode of a \fourD{} standard map.
Whether there are other possibilities for which the
Moser map has uncoupled dynamics with respect to some invariant
canonical planes on which the dynamics is conjugate
to \Hen~maps is presently not clear and left for future study.

Dynamics of a pair of coupled \Hen~maps has been studied previously for example in \cite{Mao88, Ding90, Todesco94, Bountis94, Todesco96, Gemmi97, Vrahatis96, Vrahatis97, Giovannozzi98, Bountis06}, particularly in regard to models of storage rings for particle accelerators.

\subsection{Decoupled Limits: \Hen~Maps}\label{sec:Decoupled}
To find parameter values for which the Moser map is decoupled, we search for a coordinate transformation that reveals the invariant planes.
This transformation should be affine in order to maintain the quadratic form. So that the resulting map is symplectic with the standard Poisson matrix \Eq{Symplectic} and to maintain the momentum-coordinate split for the \Hen~form, we start with the linear transformation:
$$
	(\xi,\eta) = S(q,p) = (A q, \rho A^{-T} p),
$$
where $A$ is invertible and $\rho > 0$.
This transformation is symplectic-with-multiplier, $DS^TJ DS = \rho J$. In the new coordinates
the map \Eq{ShiftedMap} becomes
\begin{align*}
	q' &=  q+ \hat{C}^{-T}(-p + \hat{C} q + \nabla_q \hat{U}(q)) ,\\
	p' &= \hat{C} q ,
\end{align*}
where
$$
	\hat{C} = \frac{1}{\rho} A^T C A, \qquad\qquad
	\hat{U}(q) = \frac{1}{\rho} U(Aq) ,
$$
so that $\nabla_q \hat{U}(q) = \tfrac{1}{\rho} A^T \nabla_\xi U(A q)$.
In order that the map be decoupled in the new coordinates, any cross terms in the new potential $\hat{U}$ should be zero.
This can be accomplished for \Eq{UPotential} only if $\eps_2  = 1$, $c = 0$, and $A$ is proportional to a rotation by angle $\pi/3$.
We can normalize the amplitude of the quadratic terms in the new map by setting $\rho =  1/\sqrt{12}$ choosing
$$
		A= \rho
				\begin{pmatrix} 1 & 1 \\ -\sqrt{3} & \sqrt{3} \end{pmatrix}
$$
to give
$$
	\hat{U}(q)  = (a - \sqrt{3}b)q_1 + (a + \sqrt{3}b)q_2 + \tfrac13(q_1^3 + q_2^3) .
$$
The transformation is thus fixed by this choice. In order that the resulting map be decoupled, $\hat{C}$ must be diagonal, e.g.,
\beq{Chat}
		\hat{C} = \begin{pmatrix} \eps_1 & 0 \\ 0 & 1 \end{pmatrix} ,
\eeq
where $\det{C} = \det{\hat{C}} = \eps_1 = \pm 1$. In order for this to be the case,
the original $C$ must take one of two forms:
\begin{subequations} \label{eq:CHenon}
\begin{align}
		\eps_1 = 1: \quad  C &= \begin{pmatrix} \sqrt{3} & 0 \\ 0 &\tfrac1{\sqrt{3}} \end{pmatrix},
		 \label{eq:CHenonPlus}\\
		\eps_1 = -1 : \quad C &= \begin{pmatrix} 0 & 1 \\ 1 &0 \end{pmatrix}. \label{eq:CHenonMinus}
\end{align}
\end{subequations}
Note that one could also replace $\hat{C}$ by $-\hat{C}$ in \Eq{Chat}, but by the symmetries discussed in \Sec{SecondDifference}, this gives nothing new.

With this we get the transformed map
\begin{align*}
	q'_1 &=  2q_1  + \eps_1(-p_1+ a - \sqrt{3}b + q_1^2) \\
	p'_1&= \eps_1 q_1 \\
	q'_2 &= 2q_2 -p_2 +  a + \sqrt{3}b+ q_2^2 \\
	p'_2&= q_2 .
\end{align*}
This is not quite in the \Hen~form \Eq{HenonMap}, but a final affine transformation $q \to (\hq_1-\eps_1,\hat{q_2}-1)$ and $p \to \hp - (1,1)$ brings the map to the form
\bsplit{decoupled-Henon-maps}
	\hq'_1 &=  \eps_1(-\hp_1+ 1+ a - \sqrt{3}b + \hq_1^2)\\
	\hp'_1&=  \eps_1\hq_1 \\
	\hq'_2 &= -\hp_2 +  1+a + \sqrt{3}b + \hq_2^2\\
	\hp'_2&= \hq_2 .
\esplit
Note that after this transformation we obtain, when $\eps_1 = 1$, a pair of uncoupled maps
of the \Hen-form \Eq{HenonMap}.
However, when $\eps_1=-1$, the first canonical pair has the \Hen~form only upon a mirroring
transformation, e.g., $(\hq_1,\hp_1) \to (-\hq_1,\hp_1)$; the point is that in the canonical coordinates, the $(\hq_1,\hp_1)$ components  ``rotate" under the map in the opposite sense from $(\hq_2,\hp_2)$.

The component maps have saddle-center bifurcations along the lines $a = \pm\sqrt{3}b$,
creating pairs of fixed points
for each component when $a < \pm \sqrt{3}b$, respectively.
However, in order that the  \fourD{} map have a fixed point, both components must
have fixed points, implying that $a < -\sqrt{3}|b|$. This bifurcation occurs on the same line that appear in \Fig{BifSurfaces}(a) when $c = 0$.

When the maps are decoupled, and $a < -\sqrt{3}|b|$ the four newly created fixed points
have types EE, EH, EH and HH. In particular the EE point is located at
\begin{align*}
	\eps_1q_1^* = p_1^* &= 1 - \sqrt{\sqrt{3}b-a} ,\\
	q_2^* = p_2^*       &= 1-\sqrt{-\sqrt{3}b-a} .
\end{align*}
In the original variables, the
fixed points of these maps are given by \Eq{cZeroFPs} since this corresponds to $c=0$.
The doubly elliptic fixed point remains stable in the rectangle
$$
	-4+\sqrt{3}|b| < a < -\sqrt{3}|b|
$$
since the individual maps have period-doubling bifurcations at $a = -4\pm \sqrt{3}b$, respectively.

Recall that for a symplectic map \Eq{Symplectic}, with a doubly elliptic fixed point $z^*$, the quadratic form $q(v) = v^T JDf(z^*) v$ is an invariant of the linearized dynamics. This implies stability when the form $q$ is definite \cite{Arnold68,Howard87}: an EE fixed point cannot cross the $KP=0$ line in \Fig{ABPlane} into the CU region. Equivalently, the Krein bifurcation cannot occur if the symmetric matrix
$$
	{\cal Q} = \tfrac12 [JDf(z^*) - Df^T(z^*) J]
$$
is definite. At an elliptic-elliptic point for \Eq{decoupled-Henon-maps}, ${\cal Q}$
becomes
$$
	{\cal Q} = \left(\begin{array}{c|c} \hat{C} &
						\begin{array}{cc} -\eps_1 q_1^*   & 0 \\ 0 &-q_2^*\end{array} \\
	 								\hline
						\begin{array}{cc} -\eps_1 q_1^*   & 0 \\ 0 &-q_2^*\end{array} & \hat{C} \end{array}\right) ,
$$
where $\hat{C}$ is given in \Eq{Chat}.
When \Eq{decoupled-Henon-maps} has an EE point then $q_i^{*2} < 1$. This implies that ${\cal Q}$ is positive definite when $\eps_1 = 1$, but has a pair of negative eigenvalues when $\eps_1 = -1$. Thus, only in the latter case, can coupling lead to a Krein bifurcation.

Indeed, we can see this if we re-introduce coupling by allowing $c \neq 0$.
The same transformations  that lead to \Eq{decoupled-Henon-maps} can be applied if we still take $\eps_2 = 1$ and $C$ to be one of the matrices \Eq{CHenon}. The result is the pair of coupled \Hen~maps
\bsplit{decoupled-Henon-maps-with-coupling}
	\hq'_1 &=  \eps_1\left(-\hp_1+ \ahenOne  + \hq_1^2 + \frac{c}{2\sqrt{3}}(\hq_1 + \hq_2)\right) ,\\
	\hp'_1&=  \eps_1q_1 ,\\
	\hq'_2 &= -\hp_2 + \ahenTwo + \hq_2^2 + \frac{c}{2\sqrt{3}}(\hq_1 + \hq_2) ,\\
	\hp'_2&= \hq_2 ,
\esplit
where
\bsplit{coupled-Henon-params}
	\ahenOne &= 1+ a - \sqrt{3}b -\frac{\eps_1+1}{2\sqrt{3}}c ,\\
	\ahenTwo &= 1+a + \sqrt{3}b  - \frac{\eps_1+1}{2\sqrt{3}}c .
\esplit

The stability of the fixed points for this map
can be conveniently obtained from the results
for the Moser map \Eq{ShiftedMap}.
For $\eps_1 = \eps_2 = 1$ and the diagonal matrix in \Eq{CHenonPlus},
the stability parameters \Eq{AB} become
\begin{align*}
	A &= 4(1+\sqrt{3}\xi_1) +\frac{c}{\sqrt{3}} ,\\
	B &= 6 + \frac{2c}{\sqrt{3}}(1+\sqrt{3}\xi_1) +4\sqrt{3}\xi_1(2+ \sqrt{3}\xi_1)   - 4\xi_2^2 .
\end{align*}
Since the quadfurcation occurs for $a=b=c=0$,  $\xi_1=\xi_2 = 0$, it always occurs at the point
$(A^\txtQ,B^\txtQ) = (4,6)$ for this $C$, as more generally true for the symmetric case in \Sec{Symmetric}. For this case the Krein parameter
\Eq{Krein} is always nonpositive:
\beq{KreinHenonPlus}
	KP = -4\xi_2^2 - c^2/12,
\eeq
even when the maps are coupled again by nonzero $c$.
This reflects the fact that for the two \twoD{} uncoupled
maps the elliptic motions in $(\hp_1, \hq_1)$ and $(\hp_2, \hq_2)$
have the the same orientation. This is not changed
by the coupling when $\eps_1=+1$, and a CU instability
is not possible. This case corresponds to the transition \Eq{EEHHTwoEH} with
a stability diagram like that shown in \Fig{ABDiagrams2}(b)
and neither the EE nor the HH fixed point may turn CU.

For the case $\eps_2=1$, and $\eps_1 = -1$,
for the matrix in \Eq{CHenonMinus}, we similarly obtain
\begin{align*}
	A &= 4(1+\xi_2) ,\\
	B &= 6 -12\xi_1^2 + 8\xi_2 + 4\xi_2^2-2c\xi_1 .
\end{align*}
Now the Krein parameter becomes
\beq{KreinHenonMinus}
	KP = -2\xi_1(6\xi_1 + c),
\eeq
which potentially may have either sign. So  when
the rotation directions for the two canonical planes are opposed, the coupling terms
make a Krein bifurcation possible as $c$ crosses zero.
Thus the initial quadfurcation may correspond to
the transition  $\emptyset \to$ EE $+$ HH $+$ 2 EH, as in  \Fig{ABDiagrams2}(b),
and both the EE and HH might later become CU.
In addition a direct transition to 2 CU $+$ 2 EH
is possible, which is analogous to the $A$-$B$ diagrams shown
in \Fig{ABDiagrams2}(a) for the fully coupled case.

\subsection{ Numerical illustration}\label{sec:CoupledHenon}

\InsertFig[b]{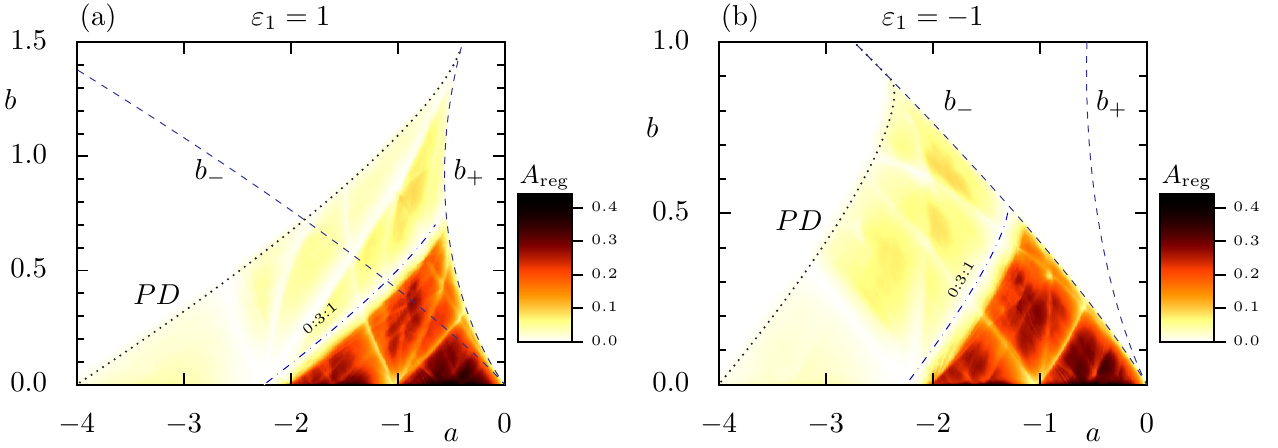}
{Area $\Areg$ of bounded orbits on the plane $\eta = C\xi$ for
the matrices \Eq{CHenon} with $\eps_2 = 1$, $c = 2b$ and for (a) $\eps_1 = 1$, and (b) $\eps_1 = -1$. The maps reduce to the uncoupled case only along the axis $b=0$, since then $c=0$ as well.
The curves $b_\pm$ represent the crossing of the
surfaces \Eq{bPlusMinus} with the plane $c = 2b$.
Thus there are four fixed points when
$b$ is below $b_-$ and two when $b$ is between $b_-$ and $b_+$.
For both examples,
to the left of the (dotted) $PD$ line the EE fixed point becomes EI.
The dashed-dotted line shows the $0:3:1$
resonance near where there is a strong decrease in $\Areg$.
}
{HenonBounded}{WidthIsDeterminedFromPdfFile}

\begin{figure}[t]
\includegraphics[scale=1.2]{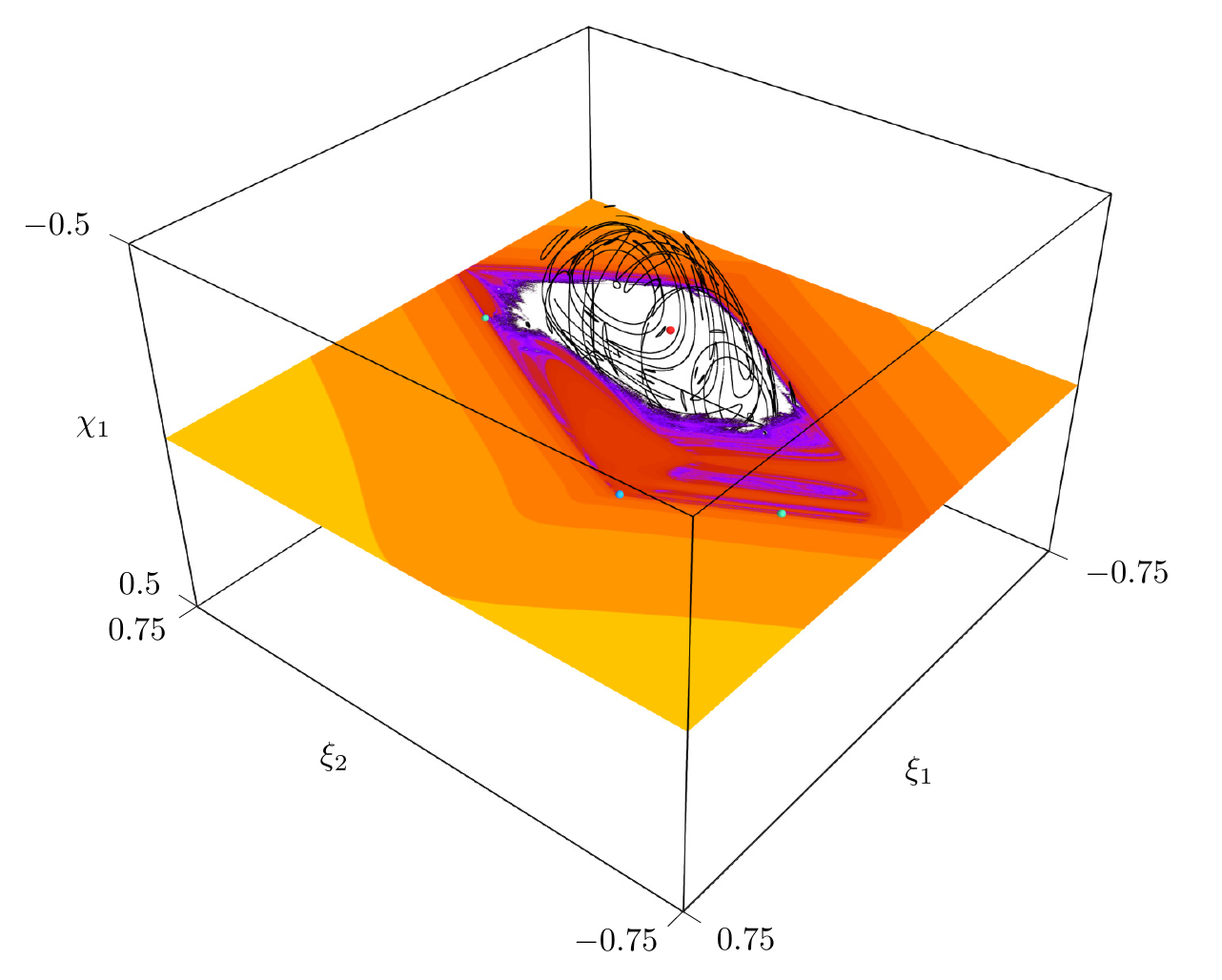}
\caption{3D phase space slice
         and corresponding escape time plot in the $\eta = C\xi$ plane
         for the diagonal matrix $C$ of \Eq{CHenonPlus} with $\eps_1 = 1$
         and  $(a, b, c) = (-0.3, 0.1, 0.2)$.
         Shown are several selected regular tori (black lines)
         near the ``outer edge'' of the ``regular region''
         surrounding the EE fixed point.
         Each torus is represented by $10^4$ points in the
         slice with $\epsilon=10^{-6}$.
         The four fixed points are shown as small spheres:
         EE (red), 2 EH (green), and HH (blue).
         The coloring of the escape times is the same
         as in \Fig{EscapeTimePlot}.
         \movierefall
          }
         \label{fig:uncouppled-coupled-Henon}
\end{figure}

Let us now illustrate the dynamics near the uncoupled case.
Figure~\ref{fig:HenonBounded} shows
the area, $\Areg$, of bounded orbits
for the two cases \Eq{CHenon} of the matrix $C$.
In these figures, the maps are uncoupled along the line $b = 0$
since we choose $c = 2b$.
As in Figs.~\ref{fig:BoundedOrbitsNew} and
\ref{fig:BoundedOrbits} one observes clear drops
of $\Areg$ along curves in the $(a, b)$ plane.
When $b=0$ both plots in \Fig{HenonBounded}
agree so that also the association
to the relevant resonances  is the same.
For both panels, the $b_-$ line corresponds
to a saddle-center bifurcation, but in
\Fig{HenonBounded}(a) two fixed points with stabilities
EH and HH disappear upon reaching $b_-$ from
below; thus the decrease in the number of fixed points
has no significant influence on the size of $\Areg$.
In contrast, for \Fig{HenonBounded}(b) the fixed
points with stabilities EE and EH disappear upon reaching $b_-$ from
below, so that only the EH and HH fixed points are left
in the region between $b_{-}$ and $b_{+}$
and $\Areg$ decreases abruptly.
In both cases the $0:3:1$ resonance
of the EE fixed point leads to a strong reduction in bounded area
as $a$ decreases through the resonance.

As $\eps_1=1$ for \Fig{HenonBounded}(a) none of fixed points can become CU.
For $\eps_1=-1$, as in \Fig{HenonBounded}(b),
the HH fixed point becomes CU when $a$ is sufficiently negative, but
again this does not significantly influence $\Areg$.

Figure~\ref{fig:uncouppled-coupled-Henon}
shows a \threeD{} phase space slice plot
and an escape time plot in the $(\xi_1, \xi_2)$-plane
for $(a, b, c) = (-0.3, 0.1, 0.2)$ and $\eps_1=1$.
The elliptic-elliptic fixed point is surrounded by
a region of predominantly regular motion
as seen by the white region of non-escaping
orbits (within $10^4$ iterations).
Corresponding regular \twoD{} tori are shown as black
curves in the slice. Some of these are secondary tori around periodic orbits and appear
as sequences of disjoint loops in the \threeD{} phase space slice.
The HH fixed point and the two EH fixed points
approximately limit the region in which the regular orbits
and orbits with longer escape times are contained.

\section{Accelerator Mode Islands}\label{sec:Accelerator}

Accelerator modes are special orbits that occur in action-angle dynamical systems that are periodic in action as well in angle.
For example,  Chirikov's area-preserving map \cite{Chi1979} on $(p,q) \in \bR \times \bT$ can be written
\bsplit{map2d}
      p' &= p + \tfrac{K}{2\pi} \sin(2\pi q') , \\
      q' &= q + \eps p            \mod 1  .
\esplit
Here we have added an additional parameter, $\eps = \pm 1$, representing the direction of the twist; this will be used for the \fourD{} case below.
For this system, projecting the ``action" or momentum, $p$, onto the torus gives a smooth map on $(p,q) \in \bT^2$.
An accelerator mode is a period-$T$ orbit of this projected map
that lifts to an orbit on $\bR \times \bT$ for which
\bsplit{AccelMode}
	p_T &= p_0 + m, \\
	q_T  &=q_0 \mod 1
\esplit
where $m \in \bZ \setminus 0$, i.e., the momentum increases by $m$ each period, so the orbit accelerates.
Such orbits were first studied  in \cite{Zaslavsky70, ChiIzr1973}.
Regular islands surrounding an elliptic accelerator mode can have a substantial impact
on the broader dynamics of the map. In particular,
chaotic trajectories for the lifted map may show super-diffusive behavior
in momentum due to long-time stickiness near the regular islands, see e.g.,
\cite{ Chi1979, Karney82, ZumKla1994, ZasEdeNiy1997, RomZas1999, Manos14, MigSimVie2015, AluFisMei2017}.

In this section we consider the \fourD{} symplectic map defined by
\bsplit{Froeschle}
	p_1' &= p_1 + \tfrac{1}{2\pi}\left[ K_1 \sin(2\pi q_1') + L \sin(2\pi(q_1'+q_2'))\right] ,\\
	p_2' &= p_2 + \tfrac{1}{2\pi}\left[ K_2 \sin(2\pi q_2') + L \sin(2\pi(q_1'+q_2)')\right] ,\\
	q_1' &= q_1 + \eps_1 p_1 ,\\
	q_2' &= q_2 + p_2 ,
\esplit
where $(p,q) \in \bR^2 \times \bT^2$. For the case $\eps_1 = 1$, this is the map first studied by Froeschl\'e \cite{Fro1971,Fro1972}, and when $\eps_1 = -1$, it has indefinite twist, and is equivalent to the map studied by Pfenniger \cite{Pfenniger85a}.  As we will see below, the parameter $\eps_1$ is related to the same parameter of the Moser map \Eq{MoserMap}.

Periodicity in the action variables implies that \Eq{Froeschle} can be projected onto the four-torus $\bT^2 \times \bT^2$,
and---as before---a period-$T$ orbit obeying \Eq{AccelMode}, now with $m \in \bZ^2\setminus (0,0)$,
is an accelerator mode. These were first studied in \cite{KooMei1990} where it was shown that like their \twoD{} counterparts, they can have a substantial effect on the action-diffusion in strongly chaotic regimes.

For the \twoD{} case, it was first shown in \cite{Karney82} that near the saddle-center bifurcation that creates
an accelerator mode of \Eq{map2d}, the dynamics can be approximated by the \Hen~map \Eq{HenonMap}, we recall
this derivation in \Sec{2DAccel} below.
Since the map \eqref{eq:Froeschle} consists of two \twoD{}
standard maps that are coupled when $\hFro \neq 0$, it is perhaps not surprising that a similar result
holds for \Eq{Froeschle} near the creation of an accelerator mode, see \Sec{4DAccel}. When $\hFro \ll 1$,
we will relate the local dynamics to the pair of coupled \Hen~maps \Eq{decoupled-Henon-maps-with-coupling}, and hence to the quadfurcation in the Moser map \Eq{ShiftedMap}.

\subsection{Accelerator Modes for the \twoD{} Standard Map} \label{sec:2DAccel}

Let us first recall the relation between the dynamics in the neighborhood of
a fixed point accelerator mode for \Eq{map2d} and the \Hen~map.
Since the form \Eq{HenonMap} is different from the one used in \cite{Karney82},
we use a different expansion and coordinate change here.

A fixed point of \Eq{map2d} on the torus must satisfy
$q'=q \mod 1$ and  $p'=p \mod 1$.
From \Eq{map2d} it follows that $p = 0 \mod 1$ so we can take $p = p^\star = 0$, and that
$q$ must solve
\begin{equation}
  2\pi m = K \sin(2\pi q)
\end{equation}
with $m\in \bZ$, accounting for the periodic boundary condition in $p$.
For $m=0$ one gets the fixed
points $(0, 0)$ and $(0, \tfrac12)$, which exist for all values of $K$.
For $m\neq 0$ one gets accelerator modes; these exist only
for sufficiently large $|K|$. Restricting to $m = 1$, and taking $K\ge 0$ gives two solutions
\begin{align*}
   q^\star_\times &= \tfrac{1}{2\pi} \arcsin \left( \tfrac{2\pi }{K} \right) ,\\
   q^\star_\circ &= \tfrac{1}{2}
                - \tfrac{1}{2\pi} \arcsin \left( \tfrac{2\pi }{K} \right) ,
\end{align*}
provided that $K > 2\pi$.
This pair of fixed points is created at $K = K^*=2\pi $ in a
saddle-center bifurcation at $(0,\tfrac14)$.
The fixed point at $(0,q^\star_\times)$
is hyperbolic when $K > 2\pi$ and that at $(0,q^\star_\circ)$
is elliptic when $2\pi < K < 2\pi\sqrt{1+4/\pi^2}$.

To relate the local dynamics to the \twoD{}  \Hen~map, we transform to coordinates
centered at the bifurcation,
$$
	(\tp,\tq) = (p,q) -  \left(0,\tfrac14 \right), \quad \tK = \pi(K - 2\pi),
$$
and assume that there is a formal parameter $\Delta \ll 1$ so that
$\tq,\tp, \tK = \cO(\Delta)$.
Expanding the map \Eq{map2d}
in the new coordinates gives, through second order in $\Delta$,
\begin{align*}
      \tp'  & = \tp + \tfrac{\tK}{2\pi^2}
                   -2\pi^2 \tq'^2 \\
      \tq' &= \tq + \eps \tp .
\end{align*}
Since this is a quadratic area-preserving map, it can be converted by an affine transformation
to the \Hen~form \Eq{HenonMap}; here we can use the transformation
\bsplit{HTransform}
	\hp &=1-2\pi^2 \eps \tq ,\\
	\hq &= \eps-2\pi^2(\tq +\eps\tp),
\esplit
to obtain, in the new coordinates,
\bsplit{henon-final}
      \hq' &=  \eps( 1-\tK  - \hp + \hq^2)\\
      \hp' &= \eps \hq  .
\esplit
When $\eps = 1$, this is the form \Eq{HenonMap} with $\ahenon = 1-\tK$.
When $\eps = -1$ the \Hen~map has the opposite rotation direction.
The fixed points of the map \Eq{henon-final}
occur when $\tK\ge0$ at
$\hp^\star = \eps \hq^\star = 1 \pm \sqrt{1-\ahenon} = 1 \pm \sqrt{\tK}$
and correspond to hyperbolic ($+$) and (initially)
elliptic ($-$) stability.
At $\ahenon = -3$ the elliptic fixed point
undergoes a period-doubling bifurcation and becomes inverse hyperbolic. This corresponds to $\tK=4$.
This value is close to the actual period-doubling of \Eq{map2d}
at $K = 2\pi\sqrt{1+4/\pi^2} \approx 7.448 $, which gives $\tK \approx 3.661 $.

\subsection{Accelerator Modes for the \fourD{} Standard Map}\label{sec:4DAccel}

A fixed point of the map \eqref{eq:Froeschle} must have $p_1 = p_2 = 0$ (mod 1) and
thus the coordinates of fixed point accelerator modes must be solutions of
\begin{equation} \label{eq:FroeschleAccel}
      \begin{aligned}
          2\pi m_1 &= K_1 \sin(2\pi q_1) + \hFro \sin(2\pi(q_1 + q_2)) , \\
          2\pi m_2 &= K_2 \sin(2\pi q_2) + \hFro \sin(2\pi(q_1 + q_2)) ,
      \end{aligned}
\end{equation}
for $m \in \bZ^2 \backslash (0, 0)$.
When $\hFro=0$, \Eq{Froeschle} reduces to a pair of uncoupled \twoD{} standard maps.
To study the behavior in the neighborhood
of an accelerator mode, we consider the bifurcation that occurs at
$(K_1,K_2,\hFro) = (2\pi,2\pi,0)$, where $q_1 = q_2 = \tfrac14$,
creating an $m = (1,1)$, fixed-point accelerator mode.

Whenever \Eq{Froeschle} has a fixed point we can evaluate its stability using
the parameters $A$ and $B$ in \Eq{ReducedPolyn}. This leads to the saddle-center parameter \Eq{BifCuvres}
\bsplit{SCFroeschle}
   SC = \eps_1 &\left[ K_1 \cos(2\pi q_1) + K_2 \cos(2\pi q_2)\right]
          L \cos\left (2\pi(q_1 + q_2)\right) \\
            & + \eps_1 K_1 K_2 \cos(2\pi q_1) \cos(2\pi q_2) .
\esplit
Note that when $q_1 = q_2 = \tfrac14$, then $SC =0$, so the accelerator modes are born on the $SC$ line.
The Krein parameter \Eq{KreinBifs} becomes
\beq{KreinFroeschleP}
	KP_{\eps_1 = 1}=  -\tfrac14 \left[K_1 \cos(2 \pi q_1) -K_2 \cos(2 \pi q_2)\right]^2
			 - [L \cos(2\pi(q_{1} + q_{2}))]^2,
\eeq
or
\bsplit{KreinFroeschleM}
	KP_{\eps_1 = -1} =   -\tfrac14  &\left[ K_1 \cos(2 \pi q_1 ) + K_2 \cos(2\pi q_2)\right] \times \\
 				      &\left[ K_1 \cos(2\pi q_1) + K_2 \cos(2\pi q_2)
						+ 4L \cos\left(2\pi(q_1 + q_2)\right) \right],
\esplit
depending upon $\eps_1$.
Consequently, when $\eps_1 = 1$, $KP \le 0$ and there are no CU fixed points (recall \cite{Kook89}).
However, when $\eps_1 = -1$, then $KP$ can have either sign. In both cases the $m = (1,1)$
accelerator mode born at $L=0$ has $KP=0$. Thus these fixed points are born at $(A,B) = (4,6)$ with a quadruple unit eigenvalue.

To find a quadratic approximation in the neighborhood of the accelerator mode,
we have to perform similar transformations to the \twoD{} case, namely we expand using
$$
   (\tp_i, \tq_i) = (p_i,q_i) - (0, \tfrac14), \quad
   \tK_i  = \pi (K_i-2\pi) .
$$
As before we assume that the coordinate and parameter deviations are all of the same order of
smallness, $\tq_i$, $\tp_i$,  $\tK_i$, $\hFro = \cO(\Delta)$.
Expanding \Eq{Froeschle} then gives, to $\cO(\Delta^2)$,
\begin{align*}
      \tp_1' &= \tp_1 + \frac{\tK_1}{2\pi^2}  -2\pi^2\tq_1'^2
                - \hFro (\tq_1' + \tq_2') ,\\
      \tp_2' &= \tp_2 + \frac{\tK_2}{2\pi^2}  -2\pi^2\tq_2'^2
                - \hFro (\tq_1' + \tq_2') ,\\
      \tq_1' &= \tq_1 + \eps_1\tp_1 ,\\
      \tq_2' &= \tq_2 + \tp_2 .
\end{align*}
Slightly generalizing the transformation \Eq{HTransform}, we let
\bsplit{HTransform4D}
	\hp_1 &=1-2\pi^2 \eps_1\tq_1 ,\\
	\hq_1 &= \eps_1-2\pi^2(\tq_1 + \eps_1\tp_1),\\
	\hp_2 &=1-2\pi^2 \tq_2,\\
	\hq_2 &= 1-2\pi^2(\tq_2+\tp_2) .
\esplit
This transformation then gives
the pair of coupled \Hen~maps \Eq{decoupled-Henon-maps-with-coupling} if we
identify
\begin{align*}
	\ahenOne &= 1-\tK_1 +(\eps_1+1) L, \\
	\ahenTwo &= 1-\tK_2 +(\eps_1+1) L, \\
       c &= -2\sqrt{3} \eps_1 \hFro .
\end{align*}
Thus the neighborhood of a fixed point accelerator mode for the Froeschl\'e map is described by the Moser map \Eq{ShiftedMap} with $\eps_2 = 1$, one of the matrices $C$ in \Eq{CHenon} and by \Eq{coupled-Henon-params}
\begin{align*}
   a &= - \tfrac12 (\tK_1 + \tK_2)
      = \tfrac{\pi}{2} (4\pi -K_1 - K_2), \\
   b &= \tfrac{1}{2\sqrt{3}} (\tK_1 - \tK_2)
      = \tfrac{\pi}{2\sqrt{3}} (K_1 - K_2) ,\\
   c &= -2\sqrt{3} \eps_1 \hFro .
\end{align*}
The implication is that the creation of an accelerator mode for \Eq{Froeschle}
at $K_1 = K_2 = 2\pi$ and $\hFro = 0$
is locally described by a special case of the quadfurcation of the Moser map.

Note that the form of the coupled \Hen~maps changes for $\eps_1 = \pm 1$, as indicated in \Eq{decoupled-Henon-maps-with-coupling}.
Transforming back to the Moser coordinates shows that the case $\eps_1 = +1$  corresponds to the diagonal matrix  \Eq{CHenonPlus}, and $\eps_1=-1$ to the anti-diagonal matrix \Eq{CHenonMinus}.
When $\eps_1 = 1$, the Krein parameter is given by \Eq{KreinHenonPlus}, which is non-positive in agreement with \Eq{KreinFroeschleP}, so fixed points of type CU are not possible.
Since the matrix \Eq{CHenonPlus} is symmetric,
the creation of the accelerator mode follows
the pattern shown in \Fig{ABDiagrams2}(b)
and the EE and HH branches above the $SC$ line
must stay outside of the CU region.
When $\eps_1=-1$, however, then there can be a direct bifurcation to
CU fixed points since the Krein parameter \Eq{KreinHenonMinus}
does not have a definite sign, in agreement with \Eq{KreinFroeschleM}.
Thus the bifurcation may either follow the pattern
as shown in \Fig{ABDiagrams2}(a) with a direct transition
to CU pair, or the one shown in in \Fig{ABDiagrams2}(b),
but with the possibility that the EE or HH (or both)
fixed point may eventually become CU.

\subsection{Numerical Illustration}

As an example, we consider the case $\eps_1 = 1$. When the
coupling $\hFro=0$, the saddle-center bifurcations of the \twoD{} maps in
$(q_i, p_i)$ occur at $K_i=2\pi$
and correspond to a quadfurcation of the form \Eq{EEHHTwoEH}, i.e.,
$$
	\emptyset \to \mbox{EE} + \mbox{2 EH} + \mbox{HH} .
$$
When $\hFro \neq 0$, these fixed points persist because when $K_i > 2\pi$ they do not have a unit eigenvalue;
moreover when  $\hFro \ll 1$, the four fixed points will have the same stability types
since the eigenvalues of the linearization are
continuous functions of the coupling, and as noted above, when $\eps_1 = 1$ there is no Krein bifurcation.
A \threeD{} slice in the phase space for $K_1=6.4$, $K_2=6.5$, and $\hFro=0.05$
is shown in \Fig{standard-map-4d-3DPSS}.
The EE fixed point is surrounded by families of two-tori
and some selected examples are shown; as in previous plots each torus typically intersects the
slice in two loops.
\InsertFig[b]{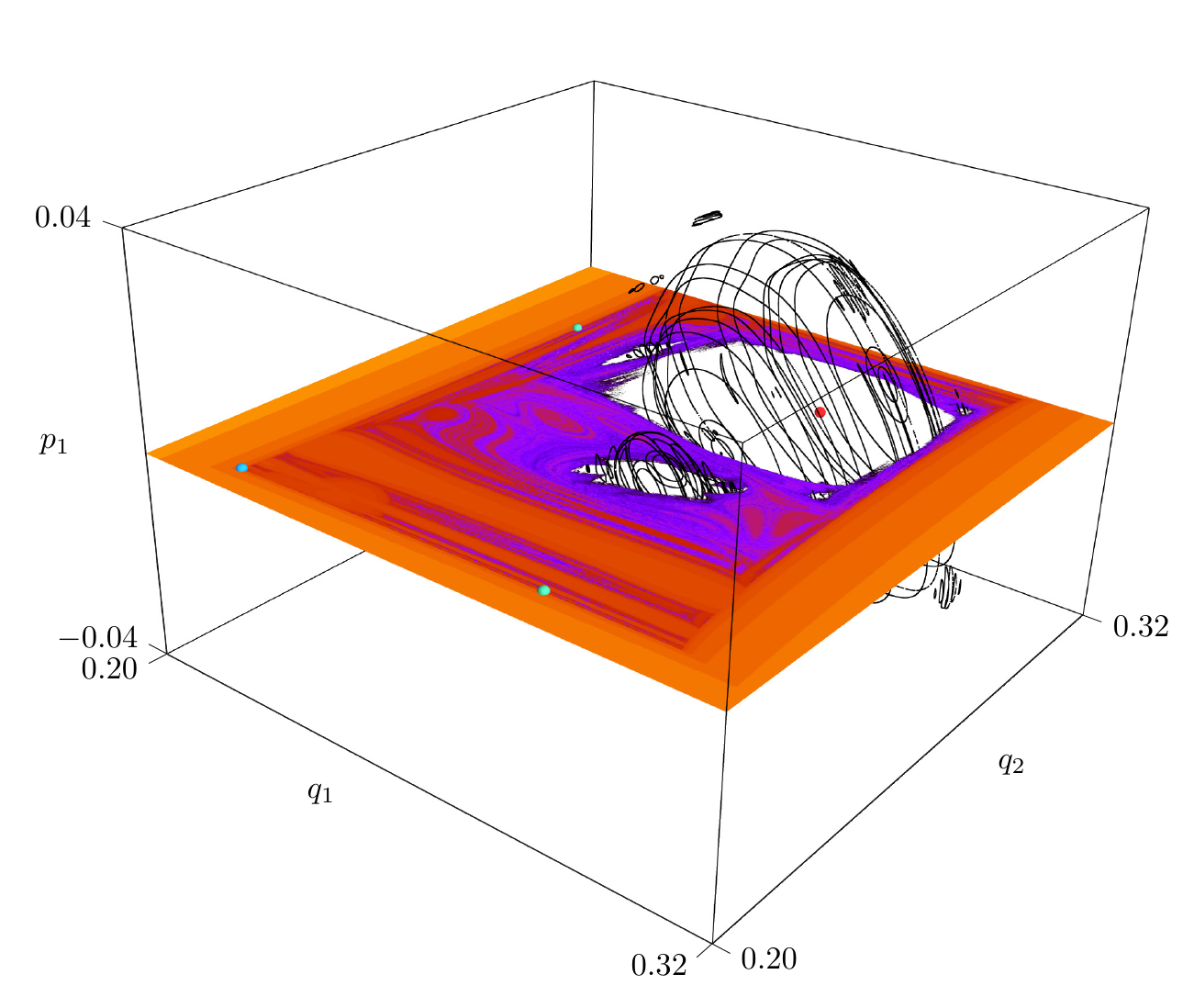}
{Geometry of accelerator modes in the \fourD{} coupled standard map
for $\eps_1 = 1$, $K_1=6.4$, $K_2=6.5$ and $\hFro=0.05$.
Shown is a \threeD{} phase space slice $|p_2| < \eps=10^{-6}$
and a corresponding escape time plot in the $p_1=p_2 = 0$ plane.
The EE fixed point (red sphere) is surrounded by
families of regular \twoD{} tori (black curves),
each shown with $10^5$ points in the slice.
The coloring of the escape times is the same as in \Fig{EscapeTimePlot}.
\movierefall}
{standard-map-4d-3DPSS}{WidthIsDeterminedFromPdfFile}
A corresponding escape time plot in the $(q_1, q_2)$-plane
for initial conditions with $(p_1, p_2) = (0, 0)$
is shown as a plane in the figure with escape time encoded
in color as in the previous plots.
The white region of long-time (here $t_{\text{max}} = 10^4$) trapped orbits
mainly surrounds the EE fixed point
and corresponds closely to the region containing the regular tori.
Note, however, that initial conditions near the
boundary of this region may eventually escape after some longer
transient time.
The location of the regular region can be
understood from the case of the uncoupled H\'enon maps:
for both $q_1$ and $q_2$
there is an interval of initial conditions with regular motion
that is limited by the chaotic dynamics caused by homoclinic intersections
of the stable and unstable manifolds of
the hyperbolic fixed points. The direct product structure
of these intervals is still reflected in the weakly coupled case.
Beyond the white region surrounding the EE fixed point
there are several smaller ones. These are
mainly due to regular tori fulfilling resonance conditions;
for example, the largest of these smaller regions,
located towards the EH fixed point at $(q_1, q_2) \approx (0.28 0.21)$,
is due to a $5:0:1$ resonance.
The next smaller white region, located towards the other EH
fixed point at $(q_1, q_2) = (0.22, 0.29)$, is
caused by a $7:0:1$ resonance.

\section{Summary and Outlook}
In this paper we have studied some of the dynamics of Moser's \fourD{}
quadratic, symplectic maps, which have a normal form \Eq{ShiftedMap} with six parameters $a,b,c,\alpha,\delta,\mu$ and two discrete parameters $\eps_1,\eps_2$.
We showed that there is a codimension-three submanifold in parameter space for which
Moser's map has a single fixed point with a pair of unit eigenvalues.
Bifurcations that occur on this submanifold correspond to creation and destruction of up to four fixed points,
the maximum possible number for the map (except for one singular case).
Along paths in parameter space that pass through this singularity it is possible that
four fixed points are created from none---a quadfurcation.
For other paths, two fixed points may be created or collide and emerge as two or four,
recall \Fig{BifSurfaces} and  \Tbl{NumFixed}.
Intuitively, the simplest case corresponds to a pair of uncoupled
\twoD{} \Hen~maps, where a quadfurcation corresponds
to choosing parameters so that the maps have simultaneous, co-located saddle-center bifurcations.

When a fixed point has four distinct eigenvalues on the unit circle (has type ``EE"), then it is linearly stable,
and according to KAM theory, is generically surrounded by a Cantor family of invariant two-tori.
We have seen that it is possible for one or two EE-fixed points to emerge from the quadfurcation.
The remaining two fixed points have at least one hyperbolic pair of eigenvalues. For the case of uncoupled
\Hen~maps, the four fixed points correspond to the cross-products of the saddles and centers of the two area-preserving maps,
giving rise to a single EE fixed point, two EH points, and one HH point. This scenario persists when coupling is added, and
describes the creation of accelerator modes in a \fourD{} standard map near zero coupling.
However, this scenario is rather special from the point of Moser's map,
where one more typically has the creation of two EE and two EH fixed points,
unless the matrix $C$ is symmetric.

For symmetric $C$, where the map is reversible, the quadfurcation has the special feature that the fixed points
are born with four unit eigenvalues. This allows, for example, the direct creation of complex unstable, CU,
fixed points. It is also of interest that when $C$ is symmetric, the limiting form of Moser's map near the
quadfurcation is a natural Hamiltonian system \Eq{HamSymmetric}. It is still an open question whether the
map is reversible \textit{only} when $C$ is symmetric

We showed in \Th{Bounded} that, when $\eps_2 \neq 0$, there is a ball that contains all bounded orbits.
We observe that orbits that remain bounded are typically associated with the EE fixed points. The computations suggest that the
center-stable manifolds of the EH points are likely candidates for partial barriers that delineate the boundary of
the region of orbits that have long escape times. In a future paper, we hope to compute these manifolds
to understand better the geometry of the barriers.

There are several other interesting questions left for future studies.
For the \twoD{} case, where the \Hen{} map
provides the universal form for any quadratic area-preserving map,
the algebraic decay of the survival probability
for the escape from a neighborhood of the regular region
is well established and understood in terms of partial barriers
and approximately described by Markov models. While for higher-dimensional
maps such power-law stickiness is numerically well established,
the mechanism for this is still not understood.
Moser's map is the prototypical example
for the study of the stickiness of a regular region in \fourD.
Of course, in this context, Arnold diffusion will also be important.

An interesting related aspect
is the study of the accelerator mode islands
in \fourD{}, where the dynamics is much richer
than the \twoD{} case,
due to the varied classes of stability and resonances.
We leave for future study the form of the local dynamics near accelerator modes
with  $m_1 \neq m_2$, as well as of modes born when $\hFro \neq 0$, in \Eq{FroeschleAccel}.

Finally, it would be of interest to study similar bifurcations for polynomial maps of higher degree; for example, the cubic case can be written in the Moser form as a composition of affine maps with a symplectic shear \cite{Koch14}. And of course, one can wonder what other exotic local bifurcations may happen in even higher-dimensional symplectic maps.

\section*{Acknowledgements}

We would like to thank Robert Easton, Roland Ketzmerick, Rafael de la Llave,
and Martin Richter for useful discussions.
JDM was supported in part by NSF grant DMS-1211350, and as a
Dresden Senior Fellow of the
Technische Universit\"at Dresden.
AB acknowledges support by the Deutsche Forschungsgemeinschaft under grant KE~537/6--1.
The visualizations of the \threeD{} phase space slices were created using
\textsc{Mayavi}~\cite{RamVar2011}.

\section*{References}

\providecommand{\newblock}{}

\end{document}